\def\x{\bm{x}}
\def\g{\bm{g}}
\def\v{\bm{v}}
\def\bdelta{\bm{\delta}}
\def\m{\bm{m}}
\def\z{\bm{z}}
\def\DS{\texttt{APMSqueeze}}
\def\Cref{\ref}
\def\OA{\texttt{APMSqueeze}}
\newcommand\numberthis{\addtocounter{equation}{1}\tag{\theequation}}
\newtheorem{theorem}{Theorem}%[section]
\newtheorem{lemma}{Lemma}%[Lemma]
\newtheorem{corollary}{Corollary}
\newtheorem{assumption}{Assumption}
\title{$\OA$: A Communication Efficient Adam-Preconditioned Momentum SGD Algorithm}
\author[1,2]{Hanlin Tang}
\author[3]{Shaoduo Gan}
\author[1]{Samyam Rajbhandari}
\author[2]{Xiangru Lian}
\author[2]{Ji Liu}
\author[1]{Yuxiong He}
\author[3]{Ce Zhang}
\affil[1]{Microsoft}
\affil[2]{Department of Computer Science, University of Rochester}
\affil[3]{Department of Computer Science, ETH Zurich}
\begin{document}
\maketitle

\begin{abstract}
  Adam is the important optimization algorithm to guarantee efficiency and accuracy for training many important tasks such as BERT and ImageNet. However, Adam is generally not compatible with information (gradient) compression technology. Therefore, the communication usually becomes the bottleneck for parallelizing Adam. In this paper, we propose a communication efficient {\bf A}DAM {\bf p}reconditioned {\bf M}omentum SGD algorithm-- named \OA-- through an error compensated method compressing gradients. The proposed algorithm achieves a similar convergence efficiency to Adam in term of epochs, but significantly reduces the running time per epoch.
  In terms of end-to-end performance (including the full-precision pre-condition step), \OA
  is able to provide {sometimes by up to $2-10\times$ speed-up
  depending on network bandwidth.} We also conduct theoretical analysis on the convergence and efficiency.
\end{abstract}

\section{Introduction}

%%%%%%%%%%%%%%%%%%%%%%%%%%%%%%%%%%
Modern advancement of machine learning is
heavily driven by the advancement of
computational power and techniques. Nowadays,
it is
not unusual that a single model requires
hundreds of computational devices such
as GPUs. As a result, scaling up training
algorithms in the distributed setting has
attracted intensive interests over the years.
Example techniques include
quantization~\citep{pmlr-v70-zhang17e,Wangni2018-ux},
decentralization~\citep{Lian2017-ni, Koloskova*2020Decentralized, NIPS2018_8028},
asynchronous communication~\citep{DBLP:journals/corr/ZhengMWCYML16, NIPS2015_6031}.

However, one gap exists in the current research
landscape --- although most distributed training
theory and analysis are developed for the
vanilla version of stochastic gradient descent (SGD),
in reality many state-of-the-art models have to be
trained using more complicated variant. For example,
to train state-of-the-art models such as
BERT, one has to resort to the Adam
optimizer, since training it with vanilla/momentum SGD has
been shown to be less effective.
However, it is not clear how these more advanced
optimizer can be scaled up --- and as we will see,
directly applying techniques researchers
developed for SGD often fails to work well for these
optimizers (See Section~\ref{resnet}). In this paper, we ask, {\em How can we
scale up sophisticated optimizers beyond
SGD?}

In this paper, we focus on one specific
optimizer, i.e., Adam, and one specific
optimization technique, i.e., communication
compression. We first analyze the limitation of
directly applying exsiting technique to
Adam. We then propose a new algorithm, $\OA$, which,
instead of applying communication
compression to Adam, uses Adam to ``pre-condition''
a communication compressed momentum SGD algoirthm.
This algorithm is powerful and it matches
Adam's result in training demanding
ML models such as BERT-Large, while communicates
$16$-$32\times$ less data per epoch.
We provide theoretical analysis
on
communication compressed momentum SGD,
which is the core component of $\OA$.
We then conduct
extensive experiments (up to BERT Large on 
128 GPUs) and show that, under different 
network conditions, $\OA$ is able to 
provide up to {one order of magnitude 
speed-up} on per-iteration runtime
({include the full-precision pre-condition step})
, while
maintaining the same empirical convergence
behavior.

{\bf (Contributions)}
We make the following
contributions.
% \vspace{-0.3cm}
\begin{itemize}
% %
% \item We first investigate whether we can directly
% apply standard communication compression
% techniques to the a distributed Adam optimizer.
% We analyze the difficulties involved in this process
% which will inspire our other technical contributions.
% %
\item We propose a new algorithm, \OA, a
communication efficient, momentum SGD algorithm
pre-conditioned with a few epochs
of a distributed Adam optimizer. We present
novel, non-trivial analysis on the convergence of
the algorithm, and show that the compressed algorithm admits the same asymptotic convergence rate as the uncompressed one.
\item We conduct experiments on large scale
ML tasks that are currently challenging for
SGD to train. We show that on both BERT-Base
and BERT-Large, our algorithm is able to achieve
the same convergence behaviour and final
accuracy as Adam, with as large as
32$\times$ communication compression. In many cases,
this reduces the training time by up to {$4$-$8\times$.}
({include the full-precision pre-condition step})
To our best knowledge, this is the first distributed
learning algorithm with communication compression
that can train a model as demanding as BERT.
\end{itemize}

\paragraph*{Problem Setting} In this paper, we focus on
the following optimization task and rely on the following
notions and definitions.
% \vspace{-0.3cm}
\begin{equation}
\min_{\bm{x}}\quad f(\bm{x}) = {1\over n} \sum_{i=1}^n \mathbb{E}_{\bm{\bm{\zeta}}\sim\mathcal{D}_i}F(\bm{x}; \bm{\bm{\zeta}}),\label{eq:main}
\end{equation}
\paragraph{Notations and definitions}
Throughout this paper, we use the following notations:
\begin{itemize}
\item $\nabla f(\cdot)$ denotes the gradient of a function $f$.
\item $f^{*}$ denotes the optimal value of the minimization problem \eqref{eq:main}.
\item $f_i(x) := \mathbb{E}_{\xi\sim\mathcal{D}_i}F_{i}(x; \xi)$.
\item $\|\cdot\|$ denotes the $\ell_2$ norm for vectors and the spectral norm for matrices.
\item $\|X\|_A:=\text{Tr}(X^{\top}AX)$.
\item $\bm{C}_{\omega}(\cdot)$ denotes the randomized compressing operator, where $\omega$ denotes the random variable. One example is the randomized quantization operator, for example, $\bm{C}_{\omega}(0.7) = 1$ with probability $0.7$ and $\bm{C}_{\omega}(0.7) = 0$ with probability $0.3$. It is also worth noting that this notion $\bm{C}_{\omega}(\cdot)$ also covers the deterministic scenario, for example, $\bm{C}_{\omega}(\cdot)$ is a one bit compression operator.
\item { $\sqrt{\cdot}$ denotes the square root of the augment. In this paper, we abuse this notation a little bit. If the augment is a vector, then it returns a vector taking the element-wise square root.}
\item { $\oslash$ denotes the element-wise divide operator, that is, the $i$th element of $\bm{m} \oslash \bm{v}$ is $\bm{m}_i / \bm{v}_i$.}
\item $\odot$ denotes the element-wise multiply operator, that is, the $i$th element of $\bm{m} \odot \bm{v}$ is $\bm{m}_i * \bm{v}_i$.
\end{itemize}

\section{Related Work}
\paragraph{Communication-Efficient Distributed Learning:}
To further reduce the communication overhead, one promising direction is to compress the variables that are sent between different workers ~\citep{NIPS2019_8694,NIPS2019_9473}. Previous work has applied a
range of techniques such as quantizaiton,
sparsification, and sketching
~\citep{Alistarh2017-yh,Agarwal2018-hg,Spring2019-ep,Ye2018-mf}.
The compression is mostly assumed to be unbiased ~\citep{Wangni2018-ux,pmlr-v80-shen18a,pmlr-v70-zhang17e,NIPS2017_6749,NIPS2018_7519}.
A general theoretical analysis of centralized compressed parallel SGD can be found in ~\citet{Alistarh2017-yh}. Beyond this, some biased compressing methods are also proposed and proven to be quite efficient in reducing the communication cost. One example is the 1-Bit SGD ~\citep{1-bitexp}, which compresses the entries in gradient vector into $\pm 1$ depends on its sign. The theoretical guarantee of this method is given in ~\citet{Bernstein:2018aa}.

% \vspace{-0.3cm}
\paragraph{Error-Compensated Compression:}
The idea of using error compensation for compression is proposed in ~\citet{1-bitexp}, where they find that by using error compensation the training could still achieves a very good speed even using $1$-bit compression. Recent study indicates that this strategy admits the same asymptotic convergence rate as the uncompressed one for single-pass~\citep{martinmemory} or double-pass~\citep{tangdouble} parameter server communication, which means that the influence of compression is trivial. More importantly, by using error compensation, it has been proved that we can use almost any compression methods~\citep{tangdouble}, whereas naive compression could only converge when the compression is unbiased (the expectation of the compressed tensor is the same as the original). Due to the promising efficiency of this method, error compensation has been applied into many related area ~\citep{NIPS2019_9321,9051706,NIPS2019_8694,8884924,NIPS2019_9473,NIPS2019_8598,NIPS2019_9610,NIPS2019_9571} in order to reduce the communication cost. 
\paragraph{Adam:} Adam~\citep{Kingma2015AdamAM} has shown
promising speed for many deep learning tasks, and also admits a very good robustness to the choice of the hyper-parameters, such as learning rate. 
It can be viewed as an adaptive method that scales the learning rate with the magnitude of the gradients on each coordinate when running SGD. Beyond Adam,  many other strategies that that shares the same idea of changing learning rate dynamically was studied. For example,  \citet{JMLR:v12:duchi11a} (Adagrad) and \citep{rmsprop} (RESprop),  use the gradient, instead of momentum, for updating the parameters;  Adadelta~\citep{DBLP:journals/corr/abs-1212-5701} changes the variance term of Adam into a non-decreasing updating rule; \citet{luo2018adaptive} proposed AdaBound that gives both upper bound and lower bound for the variance term.

%However, recent stu

\begin{algorithm}[t]\caption{$\OA$}\label{alg:global}
\begin{algorithmic}[1]
\STATE {\bfseries Initialize}: $\x_0$, learning rate $\gamma_t$, averaging rate $\eta$, initial error $\bdelta = \boldsymbol{0}$, $\m_0 = \boldsymbol{0}$, $\v_0 = \boldsymbol{0}$, number of total iterations $T$,  warm-up steps $T_{w}$.

\STATE Running the original Adam for $T_{w}$ steps.
\FOR {$t=T_w,\ldots,T$}

\STATE \textbf{(On $i$-th node)}
\STATE  Randomly sample $\bm{\xi}_t^{(i)}$  and compute local stochastic gradient $\g_t^{(i)} := \nabla F_i(\x_t^{(i)}, \bm{\xi}_t^{(i)})$.

\STATE Update the local momentum variable $\m_{t-1}$ according to
$
\m_t^{(i)} =  \beta_1\bm{m}_{t-1} + (1 - \beta_1)\g_t^{(i)}.
$
\STATE  Divide $\m_t^{(i)}$ into $n$ chunks. Compress its $k$-th chunk (denote as $\m_t^{(i,k)}$) into $\bm{C}_\omega\left[\m_t^{(i,k)} + \bdelta_t^{(i,k)}\right]$, and update the compression error by $\bdelta_t^{(i,k)} = \m_t^{(i,k)} + \bdelta_t^{(i,k)} - \bm{C}_\omega[\m_t^{(i,k)}]$.
\STATE Send the  $\bm{C}_\omega\left[\m_t^{(i,k)}\right]$ to worker $k$. Receive the $i$-th chunk of $\bm{C}_\omega\left[\m_t^{(j,i)}\right]$ from all other workers with $j\in\{1,\cdots,n\}$.
\STATE Take the average over all $\bm{C}_\omega\left[\m_t^{(j,i)}\right]$ it receives and compress it into
$
\bm{C}_\omega\left[\overline{\m}_t^{(:,i)}\right] =\bm{C}_\omega\left[ \frac{1}{n}\sum_{j=1}^n \bm{C}_\omega\left[\m_t^{(j,i)}\right] + \overline{\bdelta}_{t-1}^{(:,i)}\right],
$
 and update the compression error accordingly by $\overline{\bdelta}_t^{(:,i)} = \frac{1}{n}\sum_{j=1}^n \bm{C}_\omega\left[\m_t^{(j,i)}\right] + \overline{\bdelta}_{t-1}^{(:,i)} - \bm{C}_\omega\left[\overline{\m}_t^{(:,i)}\right]$.
 \STATE Send $\bm{C}_\omega\left[\overline{\m}_t^{(:,i)}\right]$ to all the workers, then replace the $k$-th chunk of the original momentum $\m_t$ with $\bm{C}_\omega\left[\overline{\m}_t^{(:,k)}\right]$ after receiving it.
\STATE Update local model $\x_{t+1} = \x_t - \gamma_t \m_t\oslash\sqrt{\bm{v}_{_{\tiny T_w}}}$.
\ENDFOR
\STATE {\bfseries Output}: $\x$.
\end{algorithmic}\label{alg:de_ec}
\end{algorithm}

\section{$\OA$ Algorithm}
In this section, we will introduce $\OA$ in detail. We start with some background for error compensated compression and Adam, then we will give full description of $\OA$.

\subsection{Error Compensated Compression}
One standard  way to reduce the communication overhead for SGD is to compress the gradient before sending, which can be expressed as
\begin{align*}
\x \gets \x - \gamma C_{\omega}[\g].
\end{align*}
where $C_{\omega}[\cdot]$ {\footnote{$C_{\omega}[\cdot]$ could also include randomness.}} is the compress operator. The problem with this straightforward strategy is that the compression would slow down the training speed or even make the training diverge, because the many information would be lost after compression. Recent studies{~\citep{martinmemory,tangdouble}} shows that actually the information lost can be stored and got recovered in the next step, which is called error compensated compression.
The idea is to store compression error as $\bm{\delta}$, and send $C_{\omega}\left[\bm{g}+\bm{\delta}\right]$, where we update
$\bm{\delta}$ by using the following recursion at each time step
\[
\bm{\delta} \gets  \bm{g}+\bm{\delta} -C_{\omega}\left[\bm{g}+\bm{\delta}\right] .
\]

\subsection{Original Adam} Unlike SGD, instead of applying the gradients $\g$ directly to update the model $\x$, Adam uses two auxiliary variables $\m$ and $\v$ for the update. The mathematical updating rule of  original Adam can be summarized as:
\begin{align*}
\bm{m}_{t+1} = \beta_1\bm{m}_t + (1-\beta_1)\g_t,&\quad
\bm{v}_{t+1} =  \beta_2\bm{v}_t + (1-\beta_2)(\bm{g}_t)^2,\numberthis\label{alg:v}\\
\hat{\m} = \frac{\m}{1-\beta_1^t},&\quad
\hat{\bm{v}} = \frac{\bm{v}}{1-\beta_2^t},\\
\bm{\x}_{t+1} =  \x_t -& \gamma\hat{\bm{m}}\oslash\left(\sqrt{\hat{\bm{v}}} + \eta\right),
\end{align*}
Here $\x_t$ is the model at $t$-iteration, $\g_t = \nabla F(\x_t;\zeta_t)$ is the stochastic gradient $t$-iteration, $\gamma$ is the learning rate, $\eta$ is a constant, so as $\beta_1$ and $\beta_2$.  $\bm{m}$, $\bm{v}$,$\hat{\m}$, $\hat{\bm{v}}$ are auxiliary variables.

{\color{black}As we can see, Adam is non-linearly dependent to the gradient, and this non-linearity would lead to some intrinsic problems to the combination of Adam and error-compensation (see Supplement for more details). This leads us to $\OA$, which is capable of achieving almost the same convergence rate and can be easily combined with error compensation. }

\begin{figure}[t]
\centering
\subfigure[\scriptsize \textbf{Gather step}: Each worker sends its $i$-th chunk to worker $i$.]{
\begin{minipage}[t]{0.3\linewidth}
\centering
\includegraphics[width=1\textwidth]{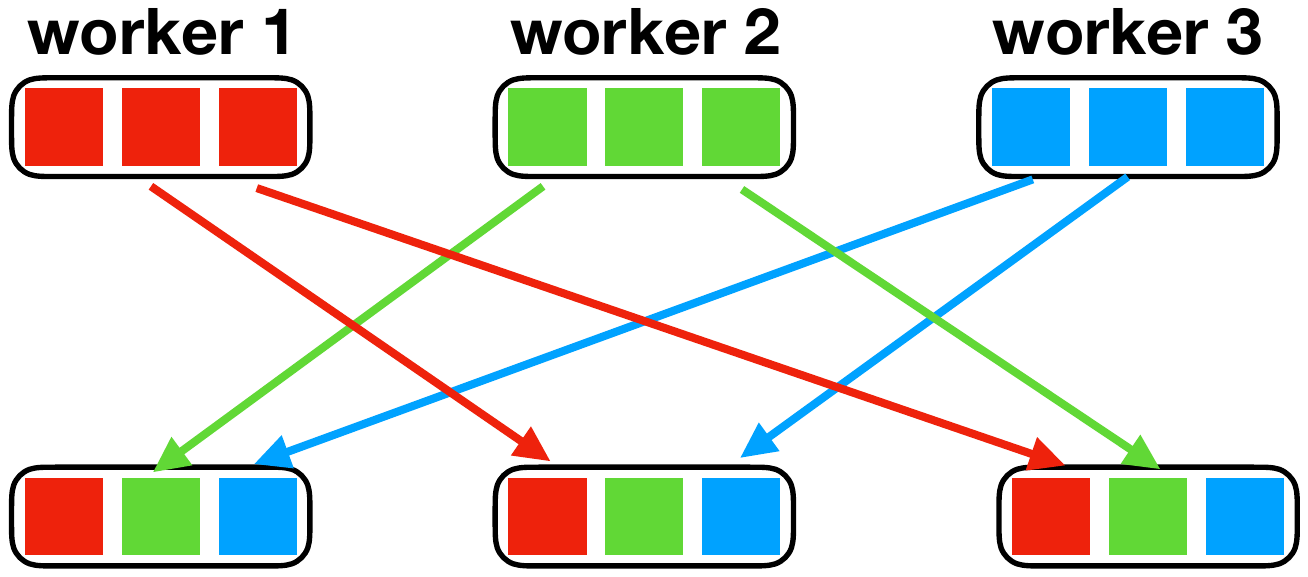}
\end{minipage}
}\quad
\subfigure[\scriptsize \textbf{Average step}: Each worker averages all chunks it receives.]{
\begin{minipage}[t]{0.3\linewidth}
\centering
\includegraphics[width=1\textwidth]{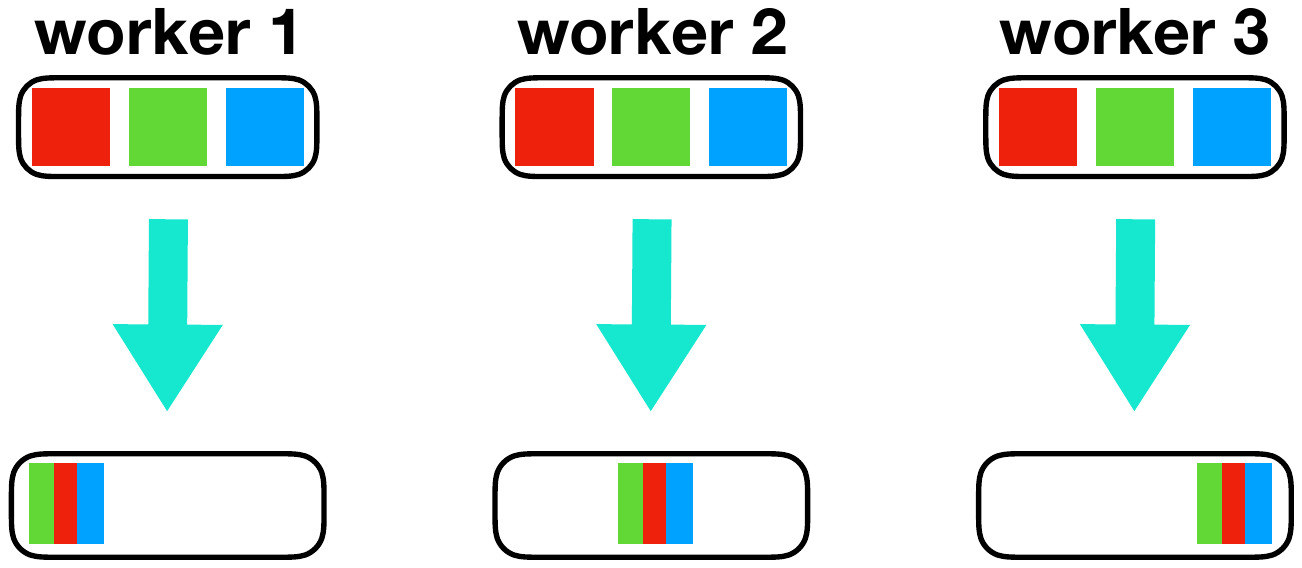}
\end{minipage}%
}\quad
\subfigure[\scriptsize \textbf{Scatter step}: Each worker receives the $i$-th chunk from worker $i$.]{
\begin{minipage}[t]{0.3\linewidth}
\centering
\includegraphics[width=1\textwidth]{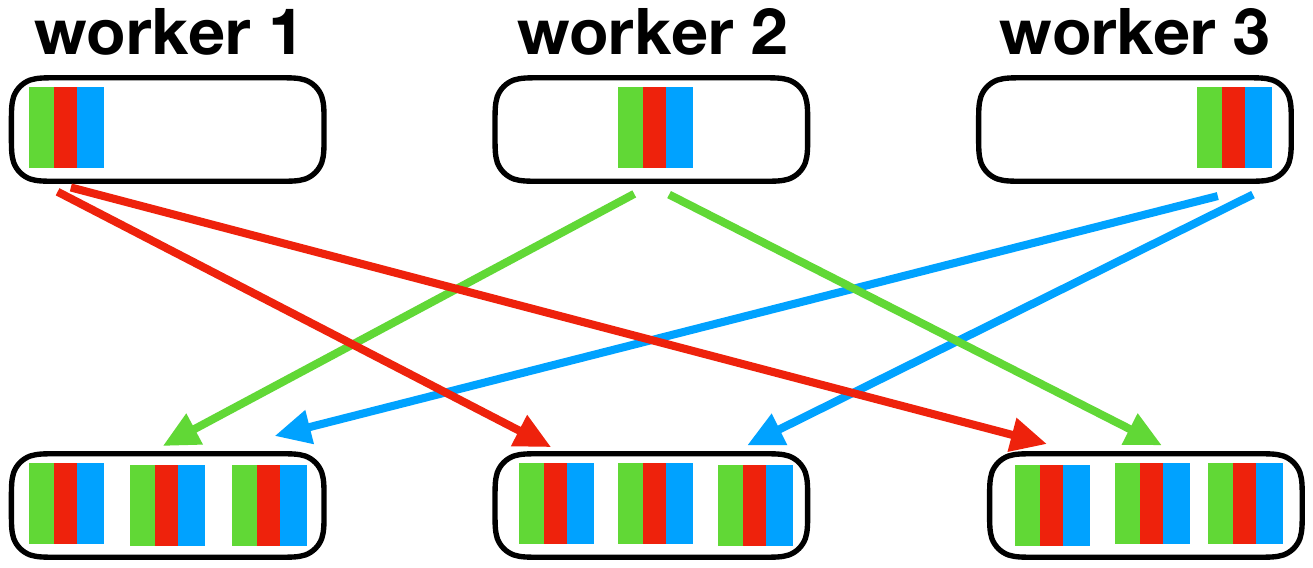}
\end{minipage}%
}%
\centering
\caption{Pipeline for Gather-Scatter AllReduce}\label{allreduce}
\end{figure}

\subsection{$\OA$}\label{alg:description}
In $\OA$, we only use Adam for a few epochs for warm-up, and after the warm-up stage, we would stop updating $\v$. The detailed description is stated below.

Consider that there are $n$ workers in the network. In order to fully utilize the bandwidth of the network, we use the Gather-Scatter AllReduce ~\citep{pmlr-v97-yu19f} (See Figure~\ref{allreduce}) prototype for the realization of the parameter-server parallelism.

Below are the steps of $\OA$:
\begin{enumerate}
\item \textbf{Local Computation:} Update the error-compensated momentum by $\m_t^{(i)} = \beta_1\bm{m}_{t-1} + (1 - \beta_1)\g_t^{(i)} + \bdelta_t^{(i)}$, where $\g_t^{(i)}$ is the stochastic gradient, $\beta_1$ is the a scalar.

\item \textbf{Local Compression:} Divide $\m_t^{(i)}$ into $n$ chunks, denote the $k$-th chunk of $\m_t^{(i)}$ as $\m_t^{(i,k)}$. Compress each chunk into $\bm{C}_\omega\left[\m_t^{(i,k)}\right]$, where $\bm{C}_\omega[\cdot]$ is the compression operation, and update the compression error by $\bdelta_t^{(i,k)} = \m_t^{(i,k)} - \bm{C}_\omega[\m_t^{(i,k)}]$.

\item \textbf{Scatter:} Send the $k$-th chunk $\bm{C}_\omega\left[\m_t^{(i,k)}\right]$ to worker $k$, and receive the $i$-th chunk of $\bm{C}_\omega\left[\m_t^{(k,i)}\right]$ from worker $k$ for all $k\in\{1,\cdots,n\}$.

\item\textbf{Local Average:}  Updating the averaged momentum with $\overline{\m}_t^{(:,i)} = \frac{1}{n}\sum_{k=1}^n \bm{C}_\omega\left[\m_t^{(k,i)}\right] + \overline{\bdelta}_{t-1}^{(:,i)}$. Recompress $\overline{\m}_t^{(:,i)}$ into $\bm{C}_\omega\left[\overline{\m}_t^{(:,i)}\right]$ and update the compression error by $\overline{\bdelta}_t^{(:,i)} = \frac{1}{n}\sum_{j=1}^n \bm{C}_\omega\left[\m_t^{(j,i)}\right] + \overline{\bdelta}_{t-1}^{(:,i)} - \bm{C}_\omega\left[\overline{\m}_t^{(:,i)}\right]$.

\item\textbf{Gather:} Send $\bm{C}_\omega\left[\overline{\m}_t^{(:,i)}\right]$ to all the workers. After receiving $\bm{C}_\omega\left[\overline{\m}_t^{(:,k)}\right]$ from all other workers, replace the $k$-th chunk of the original momentum with $\m_t^{(i,k)} = \bm{C}_\omega\left[\overline{\m}_t^{(:,k)}\right]$, which leads to
$
	\m_t = \left(\bm{C}_\omega\left[\overline{\m}_t^{(:,1)}\right],\cdots,\bm{C}_\omega\left[\overline{\m}_t^{(:,n)}\right]  \right)
$
\item \textbf{Model Update:} Update the model according to
$
	\x_{t+1} = \x_t - \gamma_t \m_t\oslash\sqrt{\bm{v}_{_{\tiny T_w}}},
$ where $\gamma_t$ is the learning rate and $v_{T_w}$ is the variance term computed at the end of the warmup step.
\end{enumerate}
Finally, the proposed $\OA$ algorithm is summarized in Algorithm \ref{alg:de_ec}.

\section{Theoretical Analysis}
In this section, we first introduce some assumptions that is necessary, then we present the theoretical guarantee of the convergence rate for $\OA$.

\begin{assumption}\label{ass:global}
We make the following assumptions:
\begin{enumerate}
\item \textbf{Lipschitzian gradient:} $f(\cdot)$ is assumed to be  with $L$-Lipschitzian gradients, which means
  \begin{align*}
  \|\nabla f(\bm{x}) - \nabla f(\bm{y}) \| \leq L \|\bm{x} - \bm{y} \|,\quad \forall \bm{x},\forall \bm{y},
  \end{align*}
 \item\label{ass:var} \textbf{Bounded variance:}
The variance of the stochastic gradient is bounded
\begin{align*}
\mathbb E_{\zeta\sim\mathcal{D}_i}\|\nabla F(\bm{x};\bm{\zeta}) - \nabla f(\bm{x})\|^2 \leq \sigma^2,\quad\forall \bm{x},\forall i.
\end{align*}
\item \textbf{Bounded magnitude  of error for $C_{\omega}[\cdot]$:}
The magnitude of worker's local errors $\bm{\delta}_t^{(i,k)}$  and the server's global error $\overline{\bdelta}_t^{(:,i)}$, are assumed to be bounded by a constant $\epsilon$
\begin{align*}
\sum_{k=1}^n\mathbb E_{\omega} \left\|\bm{\delta}_t^{(i,k)}\right\|\leq \frac{\epsilon}{2},\quad\forall t,\forall i,\quad
\sum_{i=1}^n\mathbb E_{\omega}\left\|\overline{\bdelta}_t^{(:,i)}\right\|\leq  \frac{\epsilon}{2},\quad\forall t.
\end{align*}
\end{enumerate}
\end{assumption}
% Notice that previous work ~\citep{pmlr-v80-wu18d} needs an extra assumption to upper bound the magnitude of the gradient as $G$, which we did not require in our theoretical analysis. Therefore we need to design new proving strategies for the proof.

Next we are ready to present the main theorem for $\OA$.
\begin{theorem}\label{theo:global}
 Under Assumption~\ref{ass:global}, for $\OA$, we have the following convergence rate
 \begin{align*}
   &\left(1-\frac{\gamma L}{v_{\min}} - \frac{2\gamma^2 L^2}{(1-\beta)^2v_{\min}^2} \right)\sum_{t=0}^T \mathbb E\|\nabla f(\bm{x}_t)\|^2_{V}\\
    \leq & \frac{2\mathbb E f(\bm{x}_{0}) - 2\mathbb Ef(\bm{x}^*)}{\gamma}   + \frac{6\gamma^2L^2\epsilon^2 T}{(1-\beta)^2v_{\min}^3}  +  \frac{L\gamma \sigma^2T}{nv_{\min}} + \frac{2\gamma^2L^2\sigma^2  T}{n(1-\beta)^2v_{\min}^2},\numberthis\label{main:theo:eq}
\end{align*}
where $V= \text{diag}\left(1/\v_{T_w}^{(1)},1/\v_{T_w}^{(2)},\cdots,1/\v_{T_w}^{(d)}\right)$ is a diagonal matrix spanned by $\v_{T_w}$ and $v_{\min} = \min\{\v_{T_w}^{(1)},\v_{T_w}^{(2)},\cdots,\v_{T_w}^{(d)}\}$ is the mimimum value in $\v_{T_w}$
\end{theorem}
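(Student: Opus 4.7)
The plan is to run a standard descent-lemma analysis, but in the preconditioned geometry induced by $V=\mathrm{diag}(1/\v_{T_w})$, and to combine two well-known auxiliary-sequence tricks: one for error-compensated compression, one for heavy-ball momentum. Throughout, the fixed preconditioner $\sqrt{\v_{T_w}}$ plays the role of a constant diagonal rescaling, and all the factors of $v_{\min}$ appearing in the bound come from passing norms through this rescaling (using $\|\g\oslash\sqrt{\v_{T_w}}\|^2\le \|\g\|^2/v_{\min}$ and $\|\g\|_V^2\le \|\g\|^2/v_{\min}$).

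First, I would apply $L$-smoothness to $f$ along the actual update $\x_{t+1}-\x_t=-\gamma\,\m_t\oslash\sqrt{\v_{T_w}}$, which gives a bound of the shape
\[
\mathbb E f(\x_{t+1}) \le \mathbb E f(\x_t) - \gamma\langle\nabla f(\x_t),\mathbb E\m_t\oslash\sqrt{\v_{T_w}}\rangle + \tfrac{L\gamma^2}{2v_{\min}}\mathbb E\|\m_t\|^2.
\]
Because the two layers of compression in Algorithm~\ref{alg:de_ec} are both error-compensated, the quantity that is actually applied to $\x_t$ is not the true averaged momentum $\overline{\m}_t:=\tfrac1n\sum_j\m_t^{(j)}$ but rather $\overline{\m}_t$ shifted by the two residual error vectors. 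I would therefore introduce the auxiliary sequence
\[
\tilde{\x}_t \;:=\; \x_t - \gamma\bigl(\bdelta_t^{\mathrm{loc}} + \overline{\bdelta}_t\bigr)\oslash\sqrt{\v_{T_w}},
\]
where $\bdelta_t^{\mathrm{loc}}$ and $\overline{\bdelta}_t$ are the aggregated worker-side and server-side compression errors at time $t$. A direct computation (exactly as in \citet{tangdouble}) shows that $\tilde{\x}_{t+1}-\tilde{\x}_t=-\gamma\,\overline{\m}_t\oslash\sqrt{\v_{T_w}}$, so the virtual sequence follows \emph{uncompressed} momentum SGD. Assumption~\ref{ass:global}.3 then directly bounds the deviation $\|\x_t-\tilde{\x}_t\|$ in terms of $\epsilon/v_{\min}$, and $L$-smoothness converts this into the $\epsilon^2/v_{\min}^3$ term in the theorem (one factor of $1/v_{\min}$ from the preconditioner in the update, another from the norm $V$, and a third from bounding $\|\nabla f(\x_t)-\nabla f(\tilde{\x}_t)\|$).

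Next I would handle the momentum. Following the classical heavy-ball trick, define
\[
\z_t \;:=\; \tilde{\x}_t - \tfrac{\gamma\beta}{1-\beta}\,\overline{\m}_{t-1}\oslash\sqrt{\v_{T_w}},
\]
so that $\z_{t+1}-\z_t = -\tfrac{\gamma}{1-\beta}\,\overline{\g}_t\oslash\sqrt{\v_{T_w}}$, where $\overline{\g}_t=\tfrac1n\sum_i\g_t^{(i)}$ is an unbiased estimator of $\nabla f(\x_t)$ with variance $\le\sigma^2/n$ by Assumption~\ref{ass:global}.2. Using $L$-smoothness again to bound $\|\nabla f(\x_t)-\nabla f(\z_t)\|$ in terms of $\|\x_t-\z_t\|$, and unrolling $\overline{\m}_{t-1}=(1-\beta)\sum_{s\le t-1}\beta^{t-1-s}\overline{\g}_s$, produces the factor $1/(1-\beta)^2$ and couples a $\|\nabla f(\x_t)\|_V^2$ term on both sides of the descent inequality. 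Taking expectations splits the cross term into a signal part $\|\nabla f(\x_t)\|_V^2$ (which I move to the left-hand side, producing the $(1-\gamma L/v_{\min}-2\gamma^2L^2/((1-\beta)^2v_{\min}^2))$ prefactor) and a noise part, whose variance gives the last two terms of \eqref{main:theo:eq} after dividing by $n$.

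Finally I would telescope the per-step inequality from $t=0$ to $T$, using $\mathbb E f(\x_T)\ge \mathbb E f(\x^*)$ to absorb the terminal term, and divide by $\gamma$. The main technical obstacle is the combined bookkeeping in step two: tracking how both compression residuals and the exponentially-weighted momentum tail interact with the preconditioner $V$, and showing that all cross terms can be absorbed either into $\|\nabla f(\x_t)\|_V^2$ on the left or into the three explicit terms on the right. Once those cancellations are done carefully, each of the four summands in \eqref{main:theo:eq} arises cleanly from one source (initialization, compression error $\epsilon$, stochastic variance $\sigma^2$, or momentum-amplified variance $\sigma^2$), with powers of $v_{\min}$ dictated by how many times the diagonal preconditioner was applied.
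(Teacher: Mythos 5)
Your overall plan (virtual sequences, descent lemma, preconditioner treated as a fixed diagonal rescaling contributing powers of $v_{\min}$, telescoping, and separating the $\epsilon$, $\sigma^2/n$ and $1/(1-\beta)^2$ contributions) is the same strategy as the paper's, which simply packages the preconditioner through the change of variables $h(\z)=f(V^{1/2}\z)$ and then analyzes error-compensated momentum SGD with a uniform step size. Working directly in the $V$-geometry, as you do, is a legitimate equivalent route. However, the central mechanism of your step two contains a genuine error. In \OA{} the error compensation is applied to the \emph{momentum}, and each worker builds its local momentum on the previous \emph{compressed} global momentum: $\overline{\m}_t:=\tfrac1n\sum_j\m_t^{(j)}=\beta\m_{t-1}+(1-\beta)\overline{\g}_t$, which is \emph{not} the uncompressed momentum recursion $\beta\overline{\m}_{t-1}+(1-\beta)\overline{\g}_t$. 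Consequently your claim that the corrected iterate $\tilde{\x}_t$ ``follows uncompressed momentum SGD'' is false: subtracting the current residuals removes the error only for one step, while past errors persist inside $\m_t$ with geometric weights, as the paper's unrolled form $\m_t=(1-\beta)\sum_s\beta^{t-s}\overline{\g}_s+\sum_s\beta^{t-s}(\overline{\bdelta}_{s-1}-\overline{\bdelta}_s)$ makes explicit. Your second sequence then does not satisfy the identity you assert; with your definitions one gets
\begin{align*}
\z_{t+1}-\z_t=-\gamma\,\overline{\g}_t\oslash\sqrt{\v_{T_w}}-\frac{\gamma\beta}{1-\beta}\bigl(\overline{\bdelta}_{t-2}-\overline{\bdelta}_{t-1}\bigr)\oslash\sqrt{\v_{T_w}},
\end{align*}
i.e.\ the gradient coefficient is $\gamma$ (not $\gamma/(1-\beta)$, since the recursion already carries the $(1-\beta)$ factor) and, more importantly, a compression-error difference survives in the increment, so the ``noise part'' you plan to split off is not purely stochastic.

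The fix is essentially the paper's construction: use a single combined auxiliary sequence $\bm{y}_t=\x_t-\frac{\gamma}{1-\beta}(\m_t+\overline{\bdelta}_{t-1})$ (in your setting, additionally divided by $\sqrt{\v_{T_w}}$), for which a short computation gives exactly $\bm{y}_{t+1}-\bm{y}_t=-\gamma\,\overline{\g}_t$; the deviation $\x_t-\bm{y}_t$ then involves \emph{both} $\|\m_t\|$ and $\|\overline{\bdelta}_{t-1}\|$, and this is where the $\epsilon^2$ term and part of the $1/(1-\beta)^2$ factor actually come from. You are also missing the ingredient that controls $\sum_t\|\m_t\|^2$: because the errors live inside the momentum, you need the geometric-sum estimate (the paper's Lemma~\ref{lemma:seq}, giving the $1/(1-\beta)^2$ amplification) to bound $\sum_t\|\m_t\|^2$ by $\sum_t\|\nabla f(\x_t)\|^2+\sigma^2T/n+2\epsilon^2T/(1-\beta)^2$; your sketch unrolls only the gradient part and therefore under-counts the $\epsilon$ contribution. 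A minor further point: Assumption~\ref{ass:global}.3 bounds first moments of the per-chunk errors, so a short argument is needed to pass to the second-moment bound $\mathbb E\|\overline{\bdelta}_t\|^2\le\epsilon^2$ used in the descent analysis.
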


Given the generic result in Theorem~\ref{theo:global}, we obtain the convergence rate for $\OA$ with appropriately chosen the learning rate $\gamma$.

\begin{corollary}\label{coro:global}
Under Assumption~\ref{ass:global}, for $\OA$, choosing
$
\gamma = \frac{1}{4L(v_{\min})^{-1} + \sigma\sqrt{\frac{ T}{n}} + \epsilon^{\frac{2}{3}} T^{\frac{1}{3}}(v_{\min})^{-1} },
$
we have the following convergence rate
\begin{align*}
\frac{1}{Tv_{\min}}\sum_{t=0}^{T-1}\mathbb{E}\|\nabla f(\bm{x}_t)\|^2_V \lesssim \frac{\sigma}{\sqrt{nT}} + \frac{\epsilon^{\frac{2}{3}}}{T^{\frac{2}{3}}} + \frac{1}{ T},
\end{align*}
where we treat $f(\bm{x}_1) - f^*$, $\beta$ and $L$ as constants.
\end{corollary}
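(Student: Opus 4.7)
The proof of the corollary is a calculation that instantiates Theorem~\ref{theo:global} with the specific choice of $\gamma$ stated. My plan has two stages: first, verify that the coefficient multiplying the gradient norm on the LHS of \eqref{main:theo:eq} is bounded below by a positive constant; second, bound each of the four RHS terms after dividing through by $Tv_{\min}$.

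For the first stage, I would exploit the fact that $\gamma^{-1} = 4L/v_{\min} + \sigma\sqrt{T/n} + \epsilon^{2/3}T^{1/3}/v_{\min}$ implies $\gamma \le v_{\min}/(4L)$, so $\gamma L/v_{\min} \le 1/4$ and $2\gamma^2L^2/v_{\min}^2 \le 1/8$. Treating $(1-\beta)$ as a constant (as permitted by the corollary), a mild adjustment of the leading constant $4$ in the denominator of $\gamma$ guarantees the LHS coefficient is at least, say, $1/2$. After dividing \eqref{main:theo:eq} by $Tv_{\min}/2$, it remains to control the RHS.

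For the second stage, I would repeatedly apply the elementary observation that if $\gamma^{-1} = a + b + c$ with $a,b,c > 0$, then $\gamma \le \min(1/a, 1/b, 1/c)$. Concretely: the first RHS term $2(f(\bm{x}_0)-f^*)/(\gamma T v_{\min})$ expands via the three-piece decomposition of $\gamma^{-1}$ into contributions of order $1/T$, $\sigma/\sqrt{nT}$, and $\epsilon^{2/3}/T^{2/3}$ (with $v_{\min}$ absorbed into constants as the corollary permits). The term $6\gamma^2L^2\epsilon^2T/((1-\beta)^2v_{\min}^3)$ is handled by using $\gamma \le v_{\min}/(\epsilon^{2/3}T^{1/3})$, giving $\gamma^2 \epsilon^2 T \lesssim v_{\min}^2 \epsilon^{2/3} T^{1/3}$ and hence an $O(\epsilon^{2/3}/T^{2/3})$ contribution. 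The term $L\gamma\sigma^2T/(nv_{\min})$ is controlled by $\gamma \le \sqrt{n/T}/\sigma$, producing $O(\sigma/\sqrt{nT})$. Finally, $2\gamma^2L^2\sigma^2T/(n(1-\beta)^2v_{\min}^2)$, again via $\gamma \le \sqrt{n/T}/\sigma$, yields $O(1/T)$. Summing gives the claimed bound.

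The main obstacle, albeit a modest one, is ensuring the LHS coefficient really is bounded below by a positive constant: the quadratic term $2\gamma^2L^2/((1-\beta)^2 v_{\min}^2)$ is controlled only once the $(1-\beta)^{-2}$ factor is absorbed, which may require enlarging the constant $4$ in $\gamma^{-1}$ to something depending on $\beta$. Since $\beta$ is treated as a constant, this does not affect the asymptotic rate. Beyond this single verification, the corollary follows from mechanical bookkeeping on the three-piece decomposition of $\gamma^{-1}$.
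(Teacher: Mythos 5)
Your proposal is correct and follows essentially the same route as the paper: substitute the chosen $\gamma$ into Theorem~\ref{theo:global}, check that the left-hand prefactor is at least $1/2$ (the paper handles the $(1-\beta)^{-2}$ factor by putting $1-\beta$ into the numerator of $\gamma$, which is the same "treat $\beta$ as a constant" adjustment you describe), and then bound each right-hand term via $\gamma \le \min(1/a,1/b,1/c)$ for the three pieces of $\gamma^{-1}$. The term-by-term bookkeeping you outline matches the paper's computation, so no gap remains.
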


This result suggests that
\begin{itemize}
\item ({\bf Comparison to SGD}) DoubleSqueeze essentially admits the same convergence rate as SGD in the sense that both of them admit the asymptotical convergence rate $O(1/\sqrt{T})$;
\item ({\bf Linear Speedup}) The asymptotical convergence rate of $\DS$ is $O(1/\sqrt{nT})$, the same convergence rate as Parallel SGD. It implies that the averaged sample complexity is $O(1/ (n\epsilon^2))$.
\end{itemize}

\section{Experiments}

\begin{table*}
  \caption{Results on GLUE.  BERT-Base (original) and BERT-Large(original) results
are from \citet{bert}; BERT-Base (uncompressed) and BERT-Large (uncompressed) are the results that uses the full-precision BertAdam and the same training parameters with the $\DS $ for training; BERT-Base (compressed) and BERT-Large (compressed) are te results using $\DS .$ The scores are the
median scores over 10 runs. We report the accuracy results for those tasks.}\label{table1}
  \centering
  \small
  \begin{tabular}{cccccccc}
  \hline  %添加表格头部粗线
  \textbf{Model}& RTE& MRPC& CoLA & SST-2& QNLI& QQP& MNLI-(m/mm) \\
  \hline  %添加表格中横线
  BERT-Base (original) & 66.4 & 84.8 & 52.1  & 93.5 & 90.5& 89.2& 84.6/83.4\\
  BERT-Base (uncompressed) & 68.2 & 84.8 & 56.8  & 91.8 & 90.9& 90.9& 83.6/83.5\\
  BERT-Base (compressed) & 69.0& 84.8 & 55.6   & 91.6 & 90.8& 90.9& 83.6/83.9\\
  \hline
  BERT-Large (original) & 70.1& 85.4 & 60.5   & 94.9 & 92.7& 89.3& 86.7/85.9\\
  BERT-Large (uncompressed) & 70.3& 86.0 & 60.3   & 93.1 & 92.2& 91.4& 86.1/86.2\\
  BERT-Large (compressed) & 70.4& 86.1 & 62.0  & 93.8 & 91.9& 91.5& 85.7/85.4\\
  \hline %添加表格底部粗线
  \end{tabular}\vspace{-0.5cm}
\end{table*}

% \begin{figure}[h]
% \centering
% \subfigure[]{
% \begin{minipage}[t]{0.45\linewidth}
% \centering
% \includegraphics[width=0.8\textwidth]{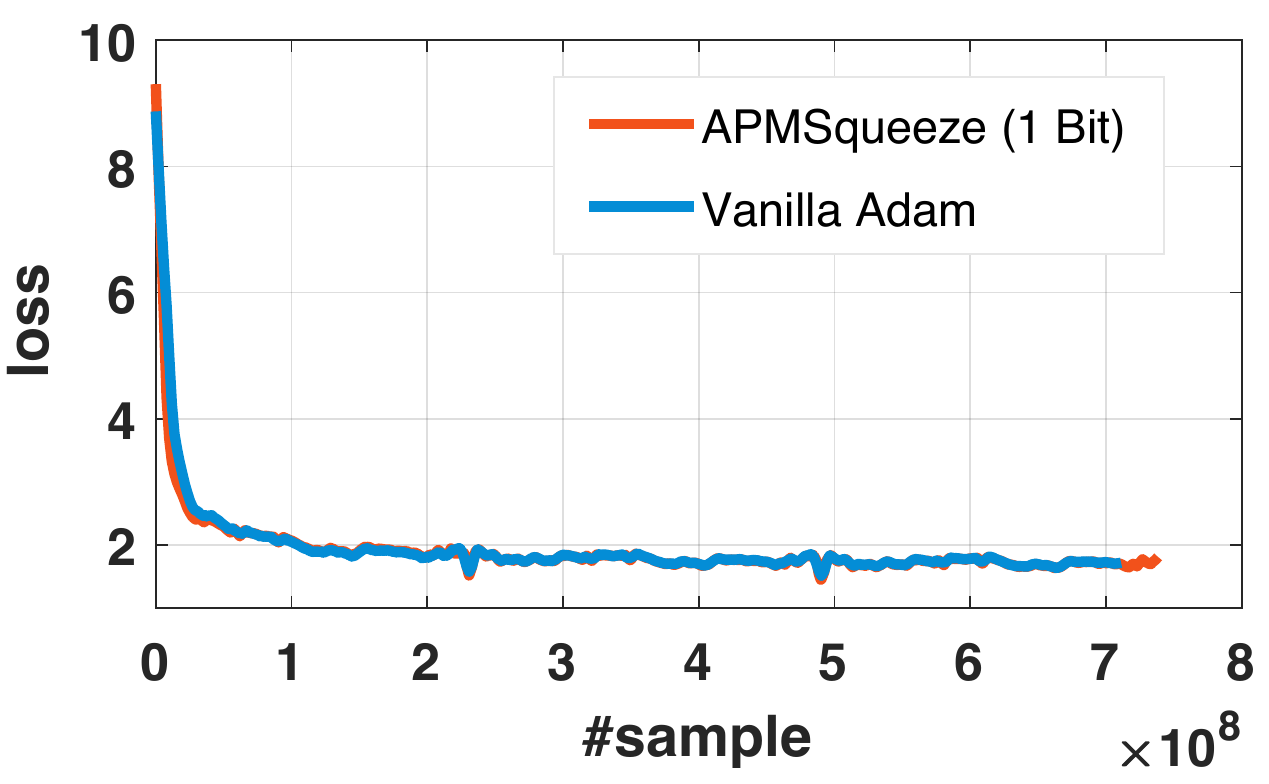}
% %\caption{fig1}
% \end{minipage}%
% }%
% \subfigure[]{
% \begin{minipage}[t]{0.45\linewidth}
% \centering
% \includegraphics[width=0.8\textwidth]{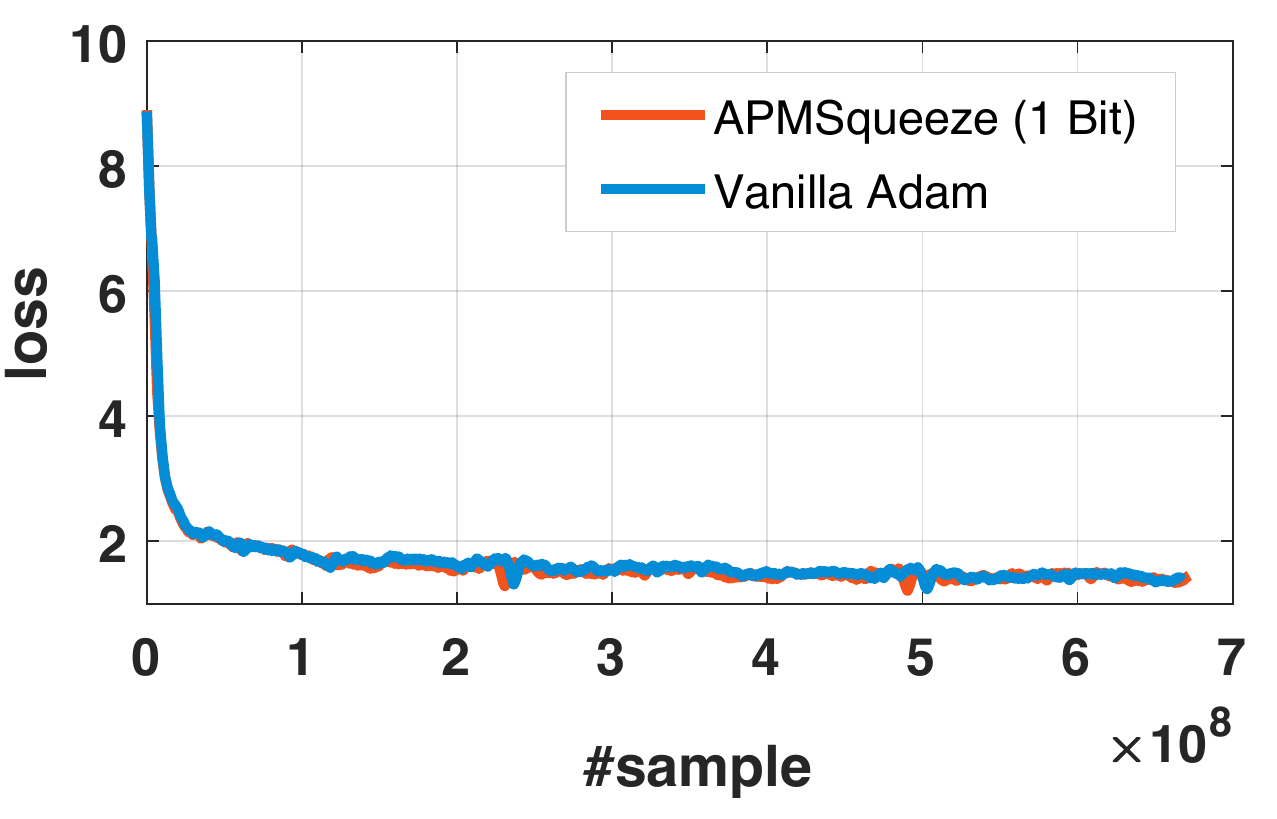}
% %\caption{fig2}
% \end{minipage}%
% }%
% \centering
% \caption{ pics}
% \end{figure}

We validate our theory with experiments that compared {$\DS$} with other
 implementations. We evaluate the performance of our algorithm for both BERT-Base ,BERT-Large, and ResNet-18.
We show that the {$\DS$} converges similar to Adam without
compression, but runs much faster than uncompressed
algorithms when bandwidth is limited.

\subsection{Compression Method}
We use the two compression methods described below:
\begin{itemize}
    \item \emph{1-bit compression}: The gradients are quantized into 1-bit
  representation (containing the sign of each element). Accompanying the
  vector, a scaling factor is computed as
  $\frac{\text{magnitude of
      compensated gradient}}{\text{magnitude of quantized gradient}}.$ The
  scaling factor is multiplied onto the quantized gradient whenever the
  quantized gradient is used, so that the recovered gradient has the same
  magnitude of the compensated gradient. This compression could reduce the $97\%$ communication cost of the original for float32 type training and $94\%$ for float16 type training.
  \item \emph{Top-$k$ compression}: We take top $k\%$ elements of the original gradient that is sorted by its absolute magnitude. The communication cost is reduced into $k\%$ of the original.
\end{itemize}
For BERT-Base and BERT-Large, we use $1$-bit compression. For ResNet-18, we use both $1$-bit compression and Top-$k$ compression.

\subsection{BERT Training}
\paragraph{Dataset and models} We benchmark the performance of $\DS $ for both  BERT-Base ($L =12$, $ H = 768$, $ A = 12$ , $110M $ params) and BERT-Large ($ L=24$, $H=1024 $,
$A=16 $, $340M $ params). For pretrain task, the dataset is the same as \citet{bert}, which is a concatenation of Wikipedia and BooksCorpus with $2.5B $
and $ 800M$ words respectively. For fine-tuning task, we use the GLUE benchmark \citep{glue}.
\paragraph{Hardware} For BERT-Base, we use 32  GPUs resident on 2 servers, and for BERT-Large we use 128 GPUs resident on 8 servers.  Each macine has 16 GPUs and each GPU is treated as a single worker. The total batch-size is $4K $ for both tasks.

\paragraph{Training Parameters }
For pre-training, the learning rate would linearly increase to $4\times 10^{-4}$  for warm-up in the first $12.5k $ steps,  and then decays into $0.99$ of the original after every $520$ steps. We set the two parameters in Algorithm~\ref{alg:global} as $\beta_1 = 0.9$ and $\beta_2 = 0.999$. 
 Unlike previous work \citep{bert}, where they use $90\%$ training step for a sequence length of 128 and then increase the sequence length to 512 for both BERT-Base and BERT-Large. In our experiment, for BERT-Base,  $118k$ steps are used  for 128 sequence length training and $22k$ steps use a sequence length of 512; while for BERT-Large, $152k$ steps are used  for 128 sequence length training and $10k$ steps use a sequence length of 512. When using sequence 128, the Adam pre-conditioned step before compression for BERT-Base is $16k$ and  $23k$ for BERT-Large. When using sequence length 512, we use $1.5k$ steps of Adam pre-conditioned steps for both tasks. 
\vspace{-0.3cm}
 \paragraph{Convergence Results}
 In Figure~\ref{fig:bert}, we report the sample-wise convergence result for the pretrain task using sequence length of 128, which consists most of the training steps. We use the BertAdam \citep{bert} optimizer as the uncompressed baseline.
  From those two figures we shall see that after the Adam pre-conditioned stage, the training effciency remains almost the same for the compressed training and the uncompressed one, while the communication is reduced into $6\%$ of the original.
\vspace{-0.3cm}
\paragraph{GLUE Results}
For GLUE we consider perform the single-task training on the dev set.
In fintuning, we serach over the hyperparameter space with batch sizes $\in\{8,16\}$
and learning rates $\in\{1\times 10^{-5},3\times 10^{-5},5\times 10^{-5},8\times 10^{-5}\}$. Other setting are the same as  pre-train task.  We report the median development
set results for each task over 10 random initializations.

Results are presented in Table~\ref{table1}. We compare our results with the uncompressed training baseline that uses the same training parameters with the compressed one for a fair comparison. We shall see that the compressed training could achieve a comparable performance with the uncompressed baseline. We also include the results from the previous work \citet{bert}.

\begin{wrapfigure}{r}{0.35\textwidth}
\vspace{-1em}
\centering
\includegraphics[width=0.35\textwidth]{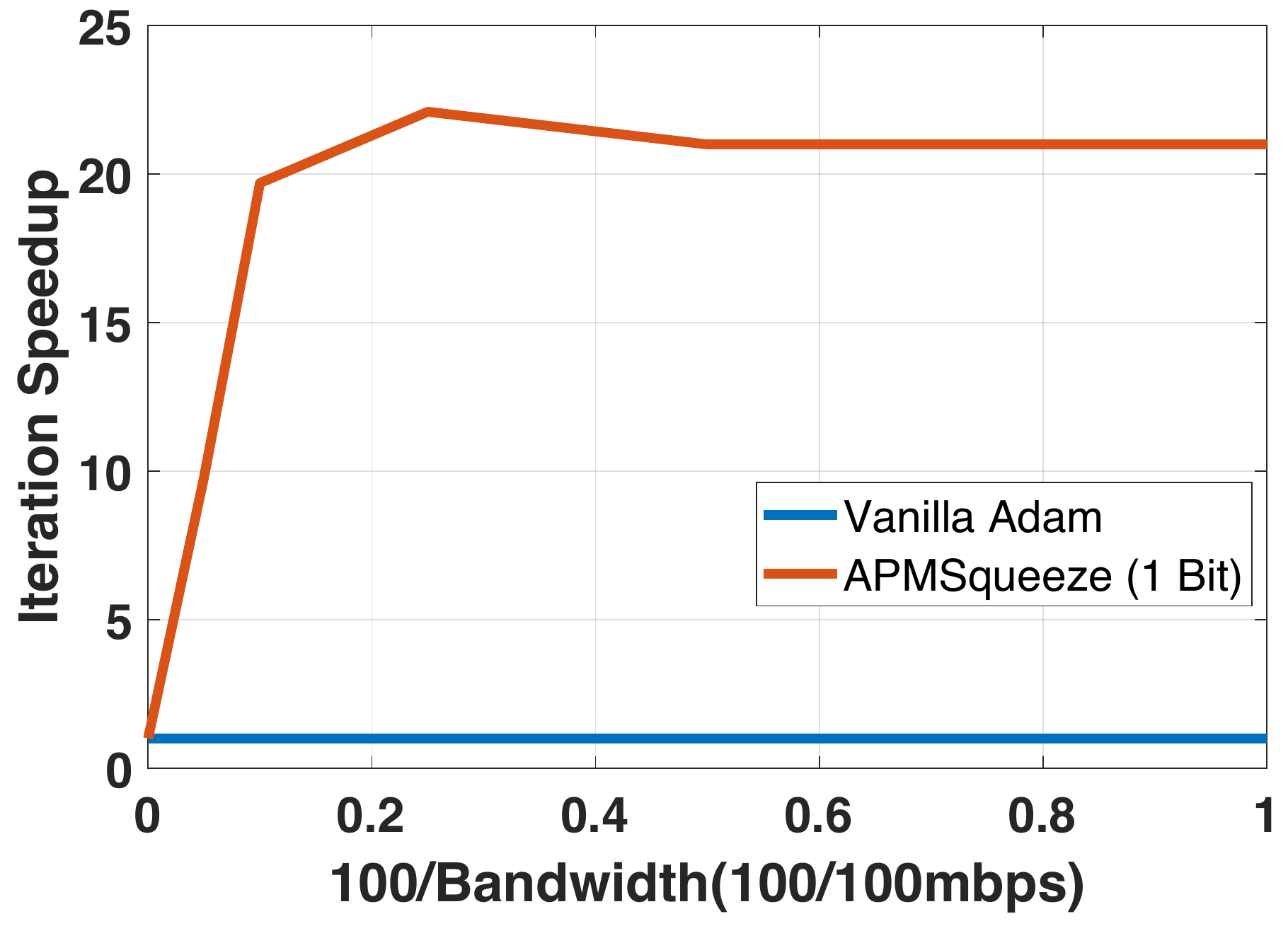}
\caption{Per-iteration speedup for BERT-base under different network bandwidth}\label{fig:bert_speed}\vspace{-0.5cm}
\end{wrapfigure}

\vspace{-0.3cm}
\paragraph{Per-Iteration Speed-up}
Figure~\ref{fig:bert_speed} shows the speed-up performance of our compressed algorithm. We use 64 GPUs and treat each as a separate worker. We train the BERT-Base model from scratch. The communication backend is OpenMPI 4.0.3 compiled with CUDA-aware support. With the proposed algorithm, we compress the communication data from 32bits to 1bit and then report the speed-up of iteration time under different network conditions. Specifically, we use traffic control utility \textit{tc} to shape the bandwidth from 100Gbits to 100Mbits. With the network going slow, the compressed case can achieve a stable speedup by around 22$\times$ over the uncompressed case.
We already reached $10\times$
speed-up for 2Gbits bandwidth
and $3\times$ for 10Gbits bandwidth.

\paragraph{End-to-end Speed-up}
When consider the end-to-end speed-up we
also need to factor in the pre-condition phase, in which we have to run
Adam with full precision. In our experiments, we set the pre-condition
run as the first 15\% of the execution.
When the network is 10Gbits we 
obtain 2$\times$ end-to-end
speed-up, while when the network is
1Gbits, we obtain 7$\times$
end-to-end speed-up.

\begin{figure}[t!]


\centering
\subfigure[BERT-Base 128]{
\begin{minipage}[t]{0.4\linewidth}
\centering
\includegraphics[width=1\textwidth]{base128.pdf}
\end{minipage}
}\quad
\subfigure[BERT-Large 128]{
\begin{minipage}[t]{0.4\linewidth}
\centering
\includegraphics[width=1\textwidth]{large128.pdf}
\end{minipage}%
}
\centering
\vspace{-0.3cm}
\caption{Epoch-wise Convergence Speed (pretrain) for BERT using Sequence Length 128}\label{fig:bert}\vspace{-0.3cm}
\end{figure}

\subsection{ResNet on CIFAR10}\label{resnet}

\begin{figure}[t]
\centering
\subfigure[Training Loss]{
\begin{minipage}[t]{0.35\linewidth}
\centering
\includegraphics[width=1\textwidth]{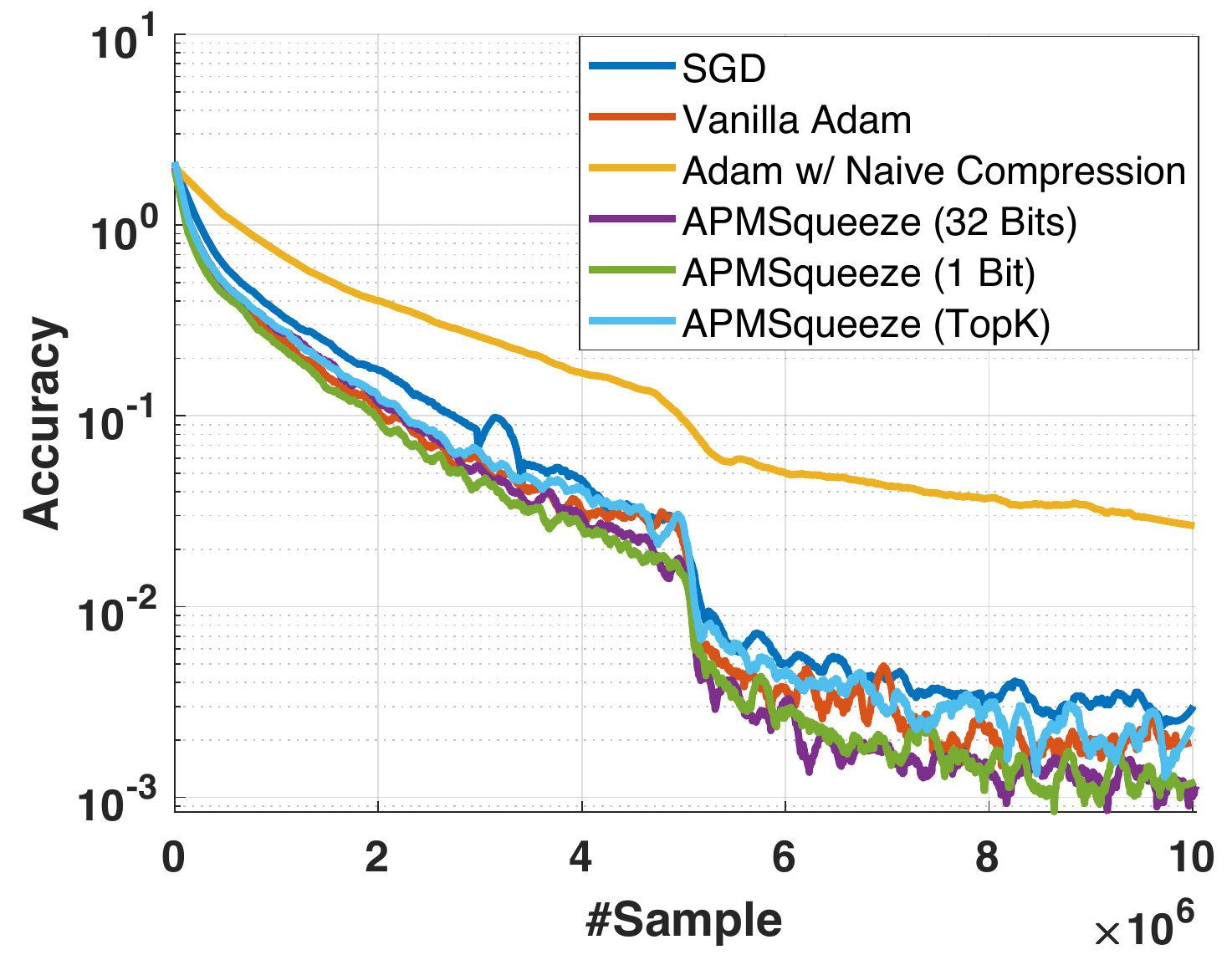}
\end{minipage}
}\quad
\subfigure[Testing Accuracy]{
\begin{minipage}[t]{0.35\linewidth}
\centering
\includegraphics[width=1\textwidth]{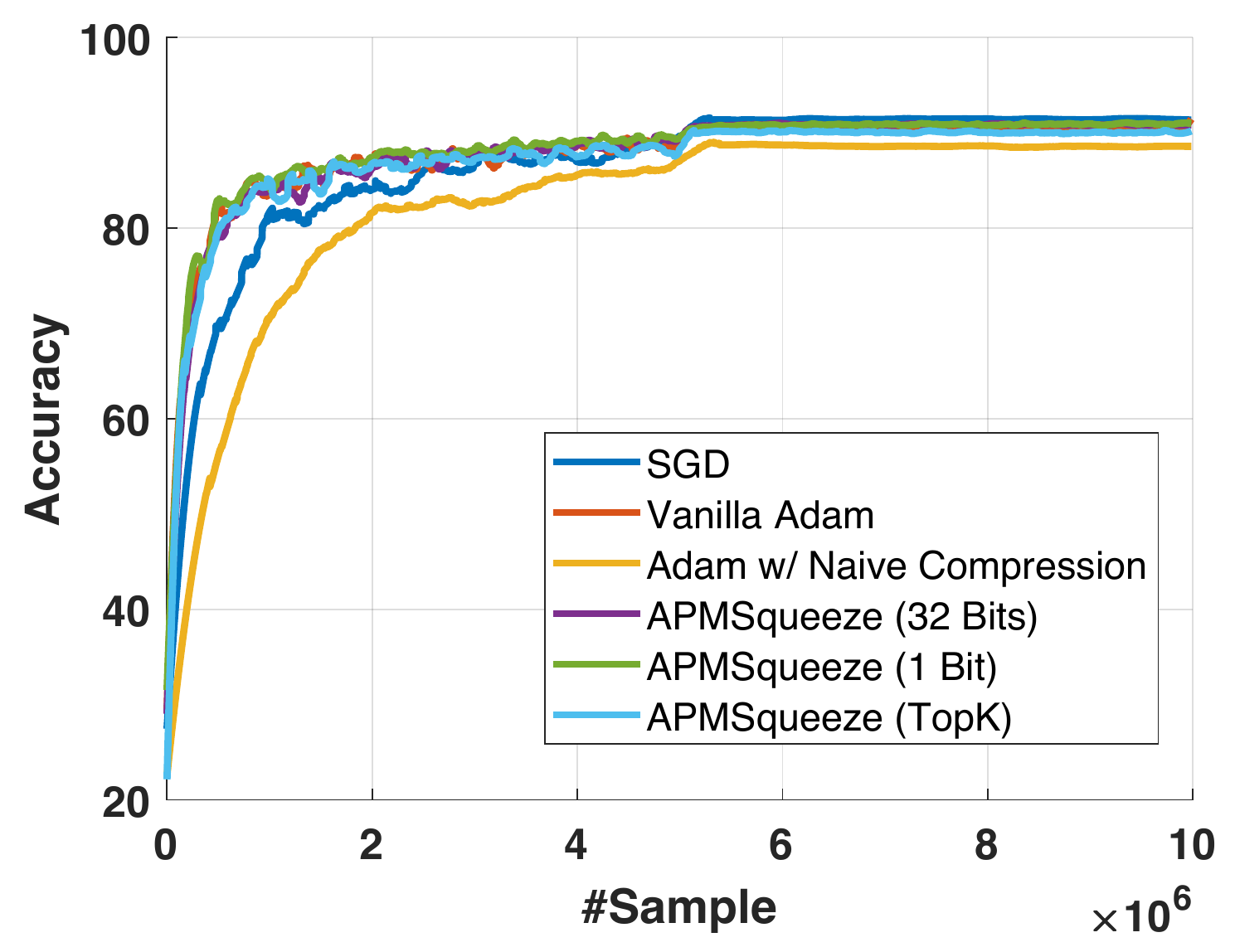}
\end{minipage}%
}
\centering
\caption{Epoch-wise Convergence Speed for ResNet-18 }\label{fig:resnet}
\vspace{-0.6cm}
\end{figure}

\paragraph{Dataset} We benchmark the algorithms using a standard image
classification task: training CIFAR10 using ResNet-18 \citep{7780459}. This dataset has a training
set of 50,000 images and a test set of 10,000 images, where each image is given
one of the 10 labels.

\paragraph{Hardware} We run the experiments on $8$ 1080Ti GPUs, each GPU is used as one worker. Here the batch-size on each worker is $128$, therefore the total batch-size is $1024$.

\vspace{-0.3cm}
\paragraph{Implementations and setups} We evaluate five implementations for comparison:
\begin{enumerate}
\item \textbf{{Original Adam}.} This is essentially the original Adam \citep{adam} implementation. We grid search the learning rate $\in\{1\times 10^{-2},1\times 10^{-3},1\times 10^{-4},1\times 10^{-5}\} $, and choose the best learning rate $1e-4 $.

\item \textbf{$\DS$ (Compressed).} This is essentially our $\DS$ algorithm. We  use $13$ epochs for Adam pre-conditioned and the total training takes $200$ epochs.  The learning rate  $1\times 10^{-4}$ is the same as the original Adam.

\item \textbf{$\DS$ (Uncompressed).} This is an uncompressed version of the $\DS $, which means we do not compress after the Adam pre-conditioned stage but still  stops updating $\v$.  We  use $13$ epochs for Adam pre-conditioned and the total training takes $200$ epochs.  The learning rate  $1\times 10^{-4}$ is the same as the original Adam.

\item \textbf{$\texttt{APGSqueeze}$.} Instead of communicating the momentum $\m_t$, we use the Error-Compensate compression strategy for communicating the gradient $\g_t$. We  use $13$ epochs for Adam pre-conditioned and the total training takes $200$ epochs.  The learning rate  $1\times 10^{-4}$ is the same as the original Adam.

\item \textbf{SGD.} This is vanilla SGD without compression. We grid search the learning rate $\in\{5\times 10^{-1},1\times 10^{-1},1\times 10^{-2},1\times 10^{-3}\} $, and choose the best learning rate $1\times 10^{-1} $.
\end{enumerate}
Notice that   the learning rate is decayed into $10\%$ of the original after every $100$ epochs.

\paragraph{Convergence Results}
In Figure~\ref{fig:resnet}, we report the sample-wise convergence result for each algorithm.
 We shall see that after the Adam pre-conditioned stage, the training efficiency remains almost the same for the compressed training and the uncompressed one, while the communication is reduced $97\%$ of the original Adam. 

\textbf{Conclusions}
In this paper, we propose an error compensated Adam preconditioned momentum SGD algorithm ($\OA$), which enables a compressed communication but could still achieve almost the same convergence rate with Adam. As a result, the communication overhead can be reduced into $3\%$ of the original and substantially accelerate the training speed under limited network bandwidth. Our theoretical analysis that $\OA$ admits a linear speed w.r.t the number of workers in the network, and is robust to any compression method. We validate the performance of $\OA$ empirically for training BERT-base, BERT-large and ResNet-18.

%\begin{figure}
%  \centering
%  \includegraphics[width=0.95\linewidth]{./plot/training_loss_topk}
%  \caption{Training loss w.r.t. epochs for $\DS$ (top-k compression),
%    Top-k SGD, Vanilla SGD, and MEM-SGD (top-k compression) on CIFAR-10.
%    $k=300000$.}
%    \vspace{-0.5cm}
%  \label{fig:training-topk}
%\end{figure}
%
%\begin{figure}[t]
%  \centering
%  \includegraphics[width=0.95\linewidth]{./plot/testing_acc_topk}
%  \caption{Testing accuracy w.r.t. epochs for $\DS$ (top-k
%    compression), Top-k SGD, Vanilla SGD, and MEM-SGD (top-k compression) on
%    CIFAR-10. $k=300000$.}
%  \label{fig:testing-topk}
%  \vspace{-0em}
%\end{figure}
%
%
%\begin{figure}[t]
%  \label{fig:per-iter-topk}
%  \centering
%  \includegraphics[width=0.95\linewidth]{./plot/per_iter_topk}
%  \caption{Per iteration time cost for $\DS$ (top-k compression), Top-k
%    SGD, MEM-SGD (top-k compression), and Vanilla SGD, under different network
%    environments. The $x$-axis represents the inverse of the bandwidth of the
%    parameter server. The $y$-axis is the number of seconds needed to finish one
%    iteration.
%    \vspace{-0.5cm}
%  \label{fig:per-iter-topk}
%  }
%\end{figure}

%%% Local Variables:
%%% mode: latex
%%% TeX-master: "main"
%%% End:
\section{Broader Impact}
In this paper, we propose a communication efficient algorithm that admits almost the same convergence rate with Adam, but achieves 10-30$\times$ speedup corresponding to the communication overhead. 

Adam, which uses a more complicated updating rule than SGD,  has shown to be a very efficient optimizer due to its fast convergence speed, and is even necessary for some large-scale machine learning tasks, such as BERT. With the increasing number of parameters in model machine learning models, it has become necessary to use large-scale (hundreds or even thousands of workers) parallel training, and the communication overhead could be comparable or even substantially overweight the computation cost for this large-scale training task. Therefore it is very important to design a communication efficient algorithm for Adam.
However, there was no communication efficient algorithm proposed before due to the  intrinsic non-linearity of Adam. To our knowledge, we are the first work that solve this problem. We want to emphasize that our method (using Adam as preconditioned warmup) could not only reduce a great amount ($97\%$) of communication cost, but also can be transferred to other applications of Adam, such as decentralized training, which is also a widely used method for reducing the communication overhead when the network latency is high.

\bibliographystyle{abbrvnat}
\bibliography{reference}

\newpage
\onecolumn
{\Huge Supplementary}
\section{Proof to the Updating Form}
Since our algorithm is equivalent to running a parameter-server prototype communication on each chunk of the gradient, so below we will assume a parameter-server model (which means the tensor is not required to be divided into $n$ chunks) for simplicity.

According to the algorithm description in Section~\ref{alg:description}, at iteration $t+1$, the updating step of the momentum term $\bm{m}_{t+1}$ can be divided into two steps:
\begin{enumerate}
\item Local Update and Compress: each worker locally update $\bm{m}_t$ and use the error-compensate strategy for compressing.
\begin{align*}
\m_{t}^{(i)} =& \beta \bm{m}_t +  (1-\beta)\bm{g}_t^{(i)}\\
\m_{t+\frac{1}{2}}^{(i)} = & C_{\omega}[\m_{t}^{(i)} + \bm{\delta}_t^{(i)}]\\
\bm{\delta}_{t+1}^{(i)} = & \m_{t}^{(i)} + \bm{\delta}_t^{(i)} - \m_{t+\frac{1}{2}}^{(i)}.
\end{align*}
\item All workers send its  $\m_{t+\frac{1}{2}}^{(i)}$ to the server. The server takes the average over them and compress it again using error-compensation.
\begin{align*}
\m_{t+\frac{1}{2}} =& \frac{1}{n}\sum_{i=1}^n \m_{t+\frac{1}{2}}^{(i)}\\
\m_{t+1} = & C_{\omega}[\m_{t+\frac{1}{2}} + \bm{\delta}_t]\\
\bm{\delta}_{t+1} = & \m_{t+\frac{1}{2}} + \bm{\delta}_t - \m_{t+1}.
\end{align*}
\item The server broadcast $\m_{t+1}$ to all workers, and all workers update the local model according to
\begin{align*}
\bm{x}_{t+1} = \bm{x}_{t} - \bm{\gamma}\m_{t+1}\oslash \sqrt{\bm{v}_{T_{w}}^2}.
\end{align*}

\end{enumerate}
So actually the updating rule above can be summarized as
\begin{align*}
\m_{t+1} =& C_{\omega}[\m_{t+\frac{1}{2}} + \bm{\delta}_t]\\
= & \m_{t+\frac{1}{2}} + \bm{\delta}_t - \bm{\delta}_{t+1}\quad\text{(from the definition of $\bm{\delta}_{t+1}$)}\\
= & \frac{1}{n}\sum_{i=1}^n C_{\omega}[\m_{t}^{(i)} + \bm{\delta}_t^{(i)}]+ \bm{\delta}_t - \bm{\delta}_{t+1}\\
= & \frac{1}{n}\sum_{i=1}^n\left( \m_{t}^{(i)} + \bm{\delta}_t^{(i)} - \bm{\delta}_{t+1}^{(i)} \right) + \bm{\delta}_t - \bm{\delta}_{t+1}\quad\text{(from the definition of $\bm{\delta}_{t+1}^{(i)}$)}\\
= & \beta \m_t + \frac{1-\beta}{n}\sum_{i=1}^n \bm{g}_t^{(i)} + \left( \frac{1}{n}\sum_{i=1}^n \bm{\delta}_t^{(i)} + \bm{\delta}_t\right) - \left( \frac{1}{n}\sum_{i=1}^n \bm{\delta}_{t+1}^{(i)} + \bm{\delta}_{t+1}\right).
\end{align*}
Denote
\begin{align*}
\overline{\bm{g}}_t = & \frac{1}{n}\sum_{i=1}^n \bm{g}_t^{(i)}\\
\overline{\bm{\delta}}_{t} = & \frac{1}{n}\sum_{i=1}^n \bm{\delta}_t^{(i)} + \bm{\delta}_t,
\end{align*}
the update rule of $\m_{t}$ can be summarized as
\begin{align*}
\m_t = \beta\m_{t-1} + (1-\beta)\overline{\bm{g}}_t + \overline{\bm{\delta}}_{t-1} - \overline{\bm{\delta}}_{t},
\end{align*}
and
\begin{align*}
\x_{t+1} = \x_t - \gamma V\m_t,
\end{align*}
where $V= \text{diag}(1/\sqrt{v_1},1/\sqrt{v_2},\cdots,1/\sqrt{v_d})$ is the a diagonal matrix that spanned with $\v_{T_{w}}$.

\section{Proof to Theorem~\ref{theo:global}}
Notice that in for $\OA$, the learning rate for each coordinate is different. In order to simplify our analysis, we instead consider another function that is defined as
\begin{align*}
H(\z) = F(V^{\frac{1}{2}}\z),
\end{align*}
also
\begin{align*}
h(\z) = f(V^{\frac{1}{2}}\z),
\end{align*}
where ${V} $ is a diagonal matrix.

In this case we have
\begin{align*}
{V}^{\frac{1}{2}}\nabla f({V}^{\frac{1}{2}}\z) = \nabla h(\z).
\end{align*}
Therefore the updating rule of $\OA$ in the view of $h(\cdot)$ is
\begin{align*}
{V}^{\frac{1}{2}}\z_{t+1} = {V}^{\frac{1}{2}}\z_t - \gamma {V}^{\frac{1}{2}}\left({V}^{\frac{1}{2}}\m_t\right).
\end{align*}
It can be easily verified that
\begin{align*}
\bm{m}_t =& (1-\beta) \sum_{s=0}^t \beta^{t-s} \overline{\bm{g}}_s + \sum_{s=0}^t \beta^{t-s}( \overline{\bm{\delta}}_{s-1} - \overline{\bm{\delta}}_{s})\\
= & (1-\beta) \sum_{s=0}^t \beta^{t-s} \frac{1}{n}\sum_{i=1}^n\nabla F(V^{\frac{1}{2}}\z_t;\xi_t^{(i)}) + \sum_{s=0}^t \beta^{t-s}( \overline{\bm{\delta}}_{s-1} - \overline{\bm{\delta}}_{s})
\end{align*}
which means
\begin{align*}
{V}^{\frac{1}{2}}\bm{m}_t =& (1-\beta) \sum_{s=0}^t \beta^{t-s} \frac{1}{n}\sum_{i=1}^n{V}^{\frac{1}{2}}\nabla F(V^{\frac{1}{2}}\z_t;\xi_t^{(i)}) + \sum_{s=0}^t \beta^{t-s}{V}^{\frac{1}{2}}( \overline{\bm{\delta}}_{s-1} - \overline{\bm{\delta}}_{s})\\
= & (1-\beta) \sum_{s=0}^t \beta^{t-s} \frac{1}{n}\sum_{i=1}^n\nabla H(V^{\frac{1}{2}}\z_t;\xi_t^{(i)}) + \sum_{s=0}^t \beta^{t-s}{V}^{\frac{1}{2}}( \overline{\bm{\delta}}_{s-1} - \overline{\bm{\delta}}_{s})\\
= & (1-\beta) \sum_{s=0}^t \beta^{t-s} \overline{\bm{g}}_s(\z) + \sum_{s=0}^t \beta^{t-s}{V}^{\frac{1}{2}}( \overline{\bm{\delta}}_{s-1} - \overline{\bm{\delta}}_{s}),
\end{align*}
where $\overline{\bm{g}}_s(\z)$ is the corresponding averaged stochastic gradient computed in the view of loss function $h(\cdot)$.

Then, if we define $\m_t(\z) = {V}^{\frac{1}{2}}\m_t$, the updating rule of $\m_t(z)$ admits
\begin{align*}
\m_t(\z) = \beta\m_{t-1}(\z) + (1-\beta)\overline{\bm{g}}_t(\z) + {V}^{\frac{1}{2}}\overline{\bm{\delta}}_{t-1} - {V}^{\frac{1}{2}}\overline{\bm{\delta}}_{t}, \numberthis\label{supp:trans_eq1}
\end{align*}
and
\begin{align*}
{V}^{\frac{1}{2}}\z_{t+1} =& {V}^{\frac{1}{2}}\z_t - \gamma {V}^{\frac{1}{2}}\m_t(\z)\\
\z_{t+1} =& \z_t - \gamma \m_t(\z).\numberthis\label{supp:trans_eq2}
\end{align*}
From \eqref{supp:trans_eq1} and \eqref{supp:trans_eq2} we shall see that using different learning rate for each coordinate is equivalent to optimizing a new loss function defined on  scaling the original coordinate and using a uniform learning for all coordinates. Therefore below we first study the behavior of the error-compensated momentum SGD using a constant learning rate.

Below are some critical lemmas for the proof of Theorem~\ref{theo:global}.
\begin{lemma}\label{lemma:seq}

Given two non-negative sequences $\{a_t\}_{t=1}^{\infty}$ and $\{b_t\}_{t=1}^{\infty}$ that satisfying
\begin{equation}
a_t =  \sum_{s=1}^t\rho^{t-s}b_{s}, \numberthis \label{eqn1}
\end{equation}
with $\rho\in[0,1)$, we have
\begin{align*}
D_k:=\sum_{t=1}^{k}a_t^2 \leq & % \frac{(1+\frac{2\rho}{1-\rho})\sum_{t=1}^kb_t^2-\rho^2a_k^2}{1-\rho^2} \\ \leq &
\frac{1}{(1-\rho)^2} \sum_{s=1}^kb_s^2.
\end{align*}
%where $S_k=\sum_{t=1}^{k}a_t$, and $D_k=\sum_{t=1}^{k}a_t^2$.
\end{lemma}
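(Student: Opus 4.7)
The plan is to obtain a pointwise bound on $a_t^2$ via Cauchy--Schwarz applied to the geometrically weighted sum, and then swap the order of summation to isolate a second geometric tail that produces the second factor of $1/(1-\rho)$.

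First I would split the geometric weight as $\rho^{t-s} = \rho^{(t-s)/2} \cdot \rho^{(t-s)/2}$ inside the definition of $a_t$ and apply Cauchy--Schwarz to get
\begin{equation*}
a_t^2 \;=\; \Bigl(\sum_{s=1}^{t} \rho^{(t-s)/2} \cdot \rho^{(t-s)/2} b_s\Bigr)^{\!2} \;\le\; \Bigl(\sum_{s=1}^{t} \rho^{t-s}\Bigr)\Bigl(\sum_{s=1}^{t} \rho^{t-s} b_s^2\Bigr) \;\le\; \frac{1}{1-\rho} \sum_{s=1}^{t} \rho^{t-s} b_s^2,
\end{equation*}
where the last step uses $\sum_{s=1}^{t}\rho^{t-s} \le \sum_{j=0}^{\infty}\rho^{j} = 1/(1-\rho)$.

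Next I would sum over $t = 1,\dots,k$ and swap the order of summation: the index pair $(t,s)$ with $1\le s\le t\le k$ is the same as $1\le s\le k$, $s\le t\le k$. This gives
\begin{equation*}
\sum_{t=1}^{k} a_t^2 \;\le\; \frac{1}{1-\rho} \sum_{t=1}^{k}\sum_{s=1}^{t} \rho^{t-s} b_s^2 \;=\; \frac{1}{1-\rho} \sum_{s=1}^{k} b_s^2 \sum_{t=s}^{k} \rho^{t-s} \;\le\; \frac{1}{(1-\rho)^2}\sum_{s=1}^{k} b_s^2,
\end{equation*}
using the geometric bound once more on the inner tail $\sum_{t=s}^{k}\rho^{t-s} \le 1/(1-\rho)$. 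This is exactly the claimed inequality, so the proof is complete.

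There is no real obstacle here; the only thing to be careful about is choosing the Cauchy--Schwarz split so that the geometric factor appears twice (once in the constant, once after the Fubini swap) rather than producing a $1/(1-\rho^2)$ or similar weaker constant. The symmetric split $\rho^{(t-s)/2}\cdot\rho^{(t-s)/2}$ is what makes the two factors of $1/(1-\rho)$ come out cleanly.
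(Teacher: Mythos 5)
Your proof is correct, and it is essentially the same argument as the paper's: the paper expands $a_t^2$ into a double sum and bounds the cross terms via $b_sb_r\le \tfrac{1}{2}(b_s^2+b_r^2)$, which yields exactly your weighted Cauchy--Schwarz bound $a_t^2\le \tfrac{1}{1-\rho}\sum_{s\le t}\rho^{t-s}b_s^2$, after which both proofs swap the order of summation and bound the remaining geometric tail by $1/(1-\rho)$.
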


\begin{proof}
From the definition, we have
\begin{align*}
S_k= & \sum_{t=1}^{k}\sum_{s=1}^t\rho^{t-s}b_{s}
=  \sum_{s=1}^{k}\sum_{t=s}^k\rho^{t-s}b_{s}
=  \sum_{s=1}^{k}\sum_{t=0}^{k-s}\rho^{t}b_{s}
\leq  \sum_{s=1}^{k}{b_{s}\over 1-\rho}, \numberthis \label{eqn3}\\
D_k=  & \sum_{t=1}^{k}\sum_{s=1}^t\rho^{t-s}b_{s}\sum_{r=1}^t\rho^{t-r}b_{r}\\
= & \sum_{t=1}^{k}\sum_{s=1}^t\sum_{r=1}^t\rho^{2t-s-r}b_{s}b_{r} \\
\leq &  \sum_{t=1}^{k}\sum_{s=1}^t\sum_{r=1}^t\rho^{2t-s-r}{b_{s}^2+b_{r}^2\over2}\\
= & \sum_{t=1}^{k}\sum_{s=1}^t\sum_{r=1}^t\rho^{2t-s-r}b_{s}^2 \\
\leq  & {1\over 1-\rho}\sum_{t=1}^{k}\sum_{s=1}^t\rho^{t-s}b_{s}^2\\
\leq & {1\over (1-\rho)^2}\sum_{s=1}^{k}b_{s}^2, \quad \text{(due to \eqref{eqn3})}
\end{align*}
which completes the proof.
\end{proof}

\begin{lemma}\label{lemma:supp_main}
Under Assumption~\ref{ass:global}, for any sequence that follows the updating rule of
\begin{align*}
\x_{t+1} =& \x_t - \gamma \m_t\\
\m_t =& \beta\m_{t-1} + (1-\beta)\overline{\bm{g}}_t + \overline{\bm{\delta}}_{t-1} - \overline{\bm{\delta}}_{t},
\end{align*}
if 
\begin{align*}
\mathbb E \overline{\g}_t =  \nabla & f(\x_t),\quad\mathbb E \|\overline{\g}_t - \nabla f(\x_t)\|^2\leq \frac{\sigma^2}{n},\quad \mathbb E\|\overline{\bm{\delta}}_t\|^2\leq \epsilon^2,\quad \forall t,\\
&\|\nabla f(\bm{x}) - \nabla f(\bm{y}) \| \leq L \|\bm{x} - \bm{y} \|,\quad \forall \bm{x},\forall \bm{y},
\end{align*}
then we can guarantee that
\begin{align*}
&\left(1-\gamma L - \frac{2\gamma^2 L^2}{(1-\beta)^2} \right)\sum_{t=0}^T \mathbb E\|\nabla f(\bm{x}_t)\|^2\\
 \leq & \frac{2\mathbb E f(\bm{x}_{1}) - 2\mathbb Ef(\bm{x}^*)}{\gamma}   + \frac{6\gamma^2L^2\epsilon^2 T}{(1-\beta)^2}  +  \frac{L\gamma\sigma^2T}{n} + \frac{2\gamma^2L^2\sigma^2 T}{n(1-\beta)^2}
\end{align*}

\end{lemma}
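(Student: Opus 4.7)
The plan is to introduce a ``shadow'' virtual sequence that simultaneously absorbs the heavy-ball momentum lag and the accumulated compression error, reducing the iterate dynamics to a vanilla-SGD style recursion on an auxiliary variable. Concretely, I would set
\[
\z_t := \x_t - \tfrac{\gamma \beta}{1-\beta}\m_{t-1}, \qquad \tilde{\z}_t := \z_t - \tfrac{\gamma}{1-\beta}\overline{\bdelta}_{t-1}.
\]
A direct computation using $\x_{t+1} = \x_t - \gamma\m_t$ together with the given recursion for $\m_t$ gives $\z_{t+1} - \z_t = -\gamma \overline{\g}_t - \tfrac{\gamma}{1-\beta}(\overline{\bdelta}_{t-1} - \overline{\bdelta}_t)$, and the $\overline{\bdelta}_{t-1}$-shift in $\tilde{\z}$ absorbs the telescoping error difference, yielding the clean identity $\tilde{\z}_{t+1} = \tilde{\z}_t - \gamma \overline{\g}_t$. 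So $\tilde{\z}$ behaves like parallel SGD, but its gradient $\overline{\g}_t$ is evaluated at $\x_t$ (not at $\tilde{\z}_t$); the entire subsequent analysis is about controlling this discrepancy.

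Next I would apply $L$-smoothness of $f$ at $\tilde{\z}_t$, namely $f(\tilde{\z}_{t+1}) \le f(\tilde{\z}_t) - \gamma\langle \nabla f(\tilde{\z}_t), \overline{\g}_t\rangle + \tfrac{L\gamma^2}{2}\|\overline{\g}_t\|^2$, take expectation (using $\mathbb{E}\overline{\g}_t = \nabla f(\x_t)$ and $\mathbb{E}\|\overline{\g}_t\|^2 \le \|\nabla f(\x_t)\|^2 + \sigma^2/n$), and bound the inner product with Young's inequality combined with Lipschitzness of $\nabla f$:
\[
-\gamma\langle \nabla f(\tilde{\z}_t), \nabla f(\x_t)\rangle \le -\tfrac{\gamma}{2}\|\nabla f(\x_t)\|^2 + \tfrac{\gamma L^2}{2}\|\tilde{\z}_t - \x_t\|^2.
\]
Telescoping the resulting per-step inequality from $t=0$ to $T$ and using $\mathbb{E}f(\tilde{\z}_{T+1}) \ge f^*$ gives
\[
\tfrac{\gamma}{2}(1 - \gamma L)\sum_{t=0}^T \mathbb{E}\|\nabla f(\x_t)\|^2 \le \mathbb{E}f(\x_1) - \mathbb{E}f^* + \tfrac{\gamma L^2}{2}\sum_{t=0}^T \mathbb{E}\|\tilde{\z}_t - \x_t\|^2 + \tfrac{L\gamma^2\sigma^2 T}{2n}.
\]

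The final ingredient is an upper bound on $\sum_t \mathbb{E}\|\tilde{\z}_t - \x_t\|^2$. From the definition, $\tilde{\z}_t - \x_t = -\tfrac{\gamma\beta}{1-\beta}\m_{t-1} - \tfrac{\gamma}{1-\beta}\overline{\bdelta}_{t-1}$, so it suffices to control $\sum_t \mathbb{E}\|\m_{t-1}\|^2$. Unrolling the momentum and using summation by parts on the error differences (with $\overline{\bdelta}_{-1}=0$) I would rewrite
\[
\m_t = (1-\beta)\sum_{s=0}^{t}\beta^{t-s}\overline{\g}_s + (1-\beta)\sum_{s=0}^{t-1}\beta^{t-s-1}\overline{\bdelta}_s - \overline{\bdelta}_t,
\]
and then invoke Lemma~\ref{lemma:seq} (applied coordinate-wise, or equivalently via Cauchy--Schwarz to pass from $\|\sum_s \rho^{t-s}v_s\|^2$ to $\tfrac{1}{(1-\rho)^2}\sum_s \|v_s\|^2$) on each convolution. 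This yields $\sum_{t=0}^T \mathbb{E}\|\m_{t-1}\|^2 \lesssim \sum_{t=0}^T \mathbb{E}\|\nabla f(\x_t)\|^2 + \tfrac{T\sigma^2}{n} + T\epsilon^2$.

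Substituting this into the telescoped descent inequality and rearranging moves the $\sum \mathbb{E}\|\nabla f(\x_t)\|^2$ contribution arising from the shadow-gap term to the left-hand side, producing the extra $-\tfrac{2\gamma^2L^2}{(1-\beta)^2}$ coefficient there, while the $T\epsilon^2$ and $T\sigma^2/n$ pieces land on the right as $\tfrac{6\gamma^2 L^2 \epsilon^2 T}{(1-\beta)^2}$ and $\tfrac{2\gamma^2 L^2 \sigma^2 T}{n(1-\beta)^2}$ respectively (combining with the $\tfrac{L\gamma\sigma^2 T}{n}$ from the quadratic term); dividing through by $\gamma/2$ recovers the theorem's exact form. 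In my view the main technical hurdle is choosing the shadow sequence so the compression-error differences telescope cleanly; a naive choice leaves an uncontrollable $\overline{\bdelta}_{t-1} - \overline{\bdelta}_t$ residual in the update, and the summation-by-parts trick in the momentum unrolling (which converts a differenced error sequence into a convolution plus a boundary term) is what lets Lemma~\ref{lemma:seq} apply without double-counting the error contributions.
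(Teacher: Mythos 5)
Your proposal is correct and follows essentially the same route as the paper's proof: a shadow sequence that absorbs the momentum lag and the accumulated compression error so that it evolves by exactly $-\gamma\overline{\g}_t$, smoothness applied at the shadow point with the gap $\|\tilde{\z}_t-\x_t\|$ controlled via $\m$ and $\overline{\bdelta}$, and Lemma~\ref{lemma:seq} used on the unrolled momentum to bound $\sum_t\mathbb{E}\|\m_t\|^2$ before rearranging. Your shifted definition $\tilde{\z}_t=\x_t-\tfrac{\gamma\beta}{1-\beta}\m_{t-1}-\tfrac{\gamma}{1-\beta}\overline{\bdelta}_{t-1}$ is only a minor variant of the paper's $\bm{y}_t=\x_t-\tfrac{\gamma}{1-\beta}\left(\m_t+\overline{\bdelta}_{t-1}\right)$ (and is in fact slightly cleaner, since $\tilde{\z}_t$ is determined before $\overline{\g}_t$ is drawn, which makes the conditional-expectation step in the inner product immediate); the remainder of your argument matches the paper's.
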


\begin{proof}
Instead of investigating $\bm{x}_t$ directly, we introduce the following sequence
\begin{align*}
\bm{y}_t = \bm{x}_t - \frac{\gamma}{1-\beta}(\bm{m}_t + \overline{\bm{\delta}}_{t-1}) .
\end{align*}
The updating rule of $\bm{y}_t$ admits
\begin{align*}
\bm{y}_{t+1} - \bm{y}_t = & \bm{x}_{t+1} - \bm{x}_t - \frac{\gamma}{1- \beta}(\bm{m}_{t+1} - \bm{m}_t - \overline{\bm{\delta}}_{t+1} + \overline{\bm{\delta}}_t) \\
= & -\gamma \bm{m}_t  - \frac{\gamma}{1-\beta}(\beta \bm{m_t} + (1-\beta)\bm{g}_t + \overline{\bm{\delta}}_{t-1} - \overline{\bm{\delta}}_{t} - \bm{m}_t + \overline{\bm{\delta}}_{t} - \overline{\bm{\delta}}_{t-1})\\
= & -\gamma\bm{g}_t.
\end{align*}
Since $f(\cdot)$ is with L-Lipschitzian, we have
\begin{align*}
\mathbb E f(\bm{y}_{t+1}) - \mathbb Ef(\bm{y}_{t}) \leq & \mathbb E\left\langle \nabla f(\bm{y}_t), \bm{y}_{t+1} - \bm{y}_t \right\rangle + \frac{L}{2}\mathbb E\left\|\bm{y}_{t+1} - \bm{y}_t \right\|^2\\
=& -\gamma \mathbb E\left\langle \nabla f(\bm{y}_t), \bm{g}_t\right\rangle + \frac{L\gamma^2}{2}\mathbb E\|\bm{g}_t\|^2\\
=& -\gamma \mathbb E\left\langle \nabla f(\bm{y}_t), \nabla f(\bm{x}_t)\right\rangle + \frac{L\gamma^2}{2}\mathbb E\|\bm{g}_t\|^2\\
= & -\frac{\gamma}{2} \mathbb E\|\nabla f(\bm{x}_t)\|^2 - \frac{\gamma}{2}\mathbb E\|\nabla f(\bm{y}_t)\|^2 + \frac{\gamma}{2} \mathbb E\|\nabla f(\bm{x}_t) - \nabla f(\bm{y}_t)\|^2+ \frac{L\gamma^2}{2}\mathbb E\|\bm{g}_t\|^2\\
\leq & -\frac{\gamma}{2} \mathbb E\|\nabla f(\bm{x}_t)\|^2 + \frac{\gamma L^2}{2}\mathbb E\|\bm{x}_t - \bm{y}_t \|^2+ \frac{L\gamma^2}{2}\mathbb E\|\bm{g}_t\|^2\\
= & -\frac{\gamma}{2} \mathbb E\|\nabla f(\bm{x}_t)\|^2 + \frac{\gamma^3 L^2}{2}\mathbb E\left\|\frac{\bm{m}_t}{1-\beta} + \frac{\overline{\bm{\delta}}_{t-1}}{1-\beta} \right\|^2+ \frac{L\gamma^2}{2}\mathbb E\|\bm{g}_t\|^2\\
\leq & -\frac{\gamma}{2} \mathbb E\|\nabla f(\bm{x}_t)\|^2 + \frac{\gamma^3 L^2}{(1-\beta)^2}\mathbb E\|\bm{m}_t\|^2 + \frac{\gamma^3L^2}{(1-\beta)^2}\mathbb E\|\overline{\bm{\delta}}_{t-1}\|^2  +  \frac{L\gamma^2}{2}\mathbb E\|\bm{g}_t\|^2\\
\leq & -\frac{\gamma}{2} \mathbb E\|\nabla f(\bm{x}_t)\|^2 + \frac{\gamma^3 L^2}{(1-\beta)^2}\mathbb E\|\bm{m}_t\|^2 + \frac{\gamma^3L^2\epsilon^2}{(1-\beta)^2}  +  \frac{L\gamma^2}{2}\mathbb E\|\bm{g}_t\|^2\\
\leq & -\frac{\gamma}{2} \mathbb E\|\nabla f(\bm{x}_t)\|^2 + \frac{\gamma^3 L^2}{(1-\beta)^2}\mathbb E\|\bm{m}_t\|^2 + \frac{\gamma^3L^2\epsilon^2}{(1-\beta)^2}  +  \frac{L\gamma^2}{2}\mathbb E\|\nabla f(\bm{x}_t)\|^2 + \frac{L\gamma^2\sigma^2}{2n}.
\end{align*}
Summing up the equation above from $t=0$ to $t=T$ we get
\begin{align*}
\mathbb E f(\bm{y}_{T+1}) - \mathbb Ef(\bm{y}_{0}) \leq -\frac{(1-\gamma L)\gamma}{2}\sum_{t=0}^T \mathbb E\|\nabla f(\bm{x}_t)\|^2 + \frac{\gamma^3 L^2}{(1-\beta)^2}\sum_{t=0}^T\mathbb E\|\bm{m}_t\|^2 + \frac{\gamma^3L^2\epsilon^2 T}{(1-\beta)^2}  +  \frac{L\gamma^2\sigma^2 T}{2n},
\end{align*}
which can be rewritten into
\begin{align*}
(1-\gamma L)\sum_{t=0}^T \mathbb E\|\nabla f(\bm{x}_t)\|^2 \leq & \frac{2\mathbb E f(\bm{y}_{0}) - 2\mathbb Ef(\bm{y}_{T+1})}{\gamma}  +  \frac{2\gamma^2 L^2}{(1-\beta)^2}\sum_{t=0}^T\mathbb E\|\bm{m}_t\|^2 + \frac{2\gamma^2L^2\epsilon^2 T}{(1-\beta)^2}  +  \frac{L\gamma\sigma^2 T}{n}.\numberthis\label{supp:final_eq1}
\end{align*}

Notice that we have
\begin{align*}
\bm{m}_t =& (1-\beta) \sum_{s=0}^t \beta^{t-s} \overline{\bm{g}}_s + \sum_{s=0}^t \beta^{t-s}( \overline{\bm{\delta}}_{s-1} - \overline{\bm{\delta}}_{s})
\end{align*}
which by using Lemma~\ref{lemma:seq}, we have
\begin{align*}
\sum_{t=0}^T\|\bm{m}_t\|^2 \leq \sum_{t=0}^T \|\bm{g}_t\|^2 + \frac{2}{(1-\beta)^2}\sum_{t=0}^T \|\overline{\bm{\delta}}_t\|^2 \leq \sum_{t=0}^T \|\nabla f(\bm{x}_t)\|^2 + \frac{\sigma^2T}{n} + \frac{2\epsilon^2T}{(1-\beta)^2} . \numberthis\label{supp:final_nound_mt}
\end{align*}
Combing \eqref{supp:final_eq1} and \eqref{supp:final_nound_mt} together we get
\begin{align*}
&\left(1-\gamma L - \frac{2\gamma^2 L^2}{(1-\beta)^2} \right)\sum_{t=0}^T \mathbb E\|\nabla f(\bm{x}_t)\|^2\\
 \leq & \frac{2\mathbb E f(\bm{y}_{0}) - 2\mathbb Ef(\bm{y}_{T+1})}{\gamma}   + \frac{6\gamma^2L^2\epsilon^2 T}{(1-\beta)^2}  +  \frac{L\gamma\sigma^2T}{n} + \frac{2\gamma^2L^2\sigma^2 T}{n(1-\beta)^2}\\
 \leq & \frac{2\mathbb E f(\bm{x}_{1}) - 2\mathbb Ef(\bm{x}^*)}{\gamma}   + \frac{6\gamma^2L^2\epsilon^2 T}{(1-\beta)^2}  +  \frac{L\gamma\sigma^2T}{n} + \frac{2\gamma^2L^2\sigma^2 T}{n(1-\beta)^2}.
\end{align*}
\end{proof}

\paragraph{Proof to Theorem~\ref{theo:global}} Since using a per-coordinate learning rate for loss function $f(\cdot)$ is equivalent to use a constant learning for all coordinates but for loss function $h(\cdot)$, the only two thing that change are
\begin{itemize}
\item \textbf{Different L-Lipschitzian coefficient}: the L-Lipschitzian coefficient for $h(\cdot)$ is
\begin{align*}
\|\nabla h(\x) - \nabla h(\bm{y})\|^2 =& \left\|V^{\frac{1}{2}} \nabla f(V^{\frac{1}{2}}\x) - V^{\frac{1}{2}} \nabla f(V^{\frac{1}{2}}\bm{y}) \right\|^2 \\
= & \left\| \nabla f(V^{\frac{1}{2}}\x) - \nabla f(V^{\frac{1}{2}}\bm{y}) \right\|^2_V\\
\leq & L^2\left\| V^{\frac{1}{2}}\x - V^{\frac{1}{2}}\bm{y} \right\|^2_V\\
=& L^2\|\x - \bm{y}\|^2_{V^2}\\
\leq & L^2V_{\max}^2\|\x - \bm{y}\|^2.
\end{align*}
Therefore the effective L-Lipschitzian coefficient of $h(\x)$ is $LV_{\max}$
\item \textbf{Different definition of $\overline{\bm{\delta}}_t$}: from \eqref{supp:trans_eq1} we shall see that actually the compression error in the view of $h(\cdot)$ is $V^{\frac{1}{2}}\overline{\bm{\delta}}_t $, so in this case we have
\begin{align*}
\mathbb E\|V^{\frac{1}{2}}\overline{\bm{\delta}}_t\|^2 \leq V_{\max}\epsilon^2
\end{align*}

\end{itemize}
\begin{proof}
From Lemma~\ref{lemma:supp_main}, we have
\begin{align*}
&\left(1-\gamma L - \frac{2\gamma^2 L^2V_{\max}^2}{(1-\beta)^2} \right)\sum_{t=0}^T \mathbb E\|\nabla h(\bm{z}_t)\|^2\\
 \leq & \frac{2\mathbb E f(\bm{x}_{0}) - 2\mathbb Ef(\bm{x}^*)}{\gamma}   + \frac{6\gamma^2L^2\epsilon^2V_{\max}^3 T}{(1-\beta)^2}  +  \frac{L\gamma V_{\max}\sigma^2T}{n} + \frac{2\gamma^2L^2\sigma^2 V_{\max}^2 T}{n(1-\beta)^2},
\end{align*}
which by using $\nabla h(\bm{z}_t) = V^{\frac{1}{2}}\nabla f(\x_t) $, it becomes
\begin{align*}
&\left(1-\gamma LV_{\max} - \frac{2\gamma^2 L^2V_{\max}^2}{(1-\beta)^2} \right)\sum_{t=0}^T \mathbb E\|\nabla f(\bm{x}_t)\|^2_{V}\\
 \leq & \frac{2\mathbb E f(\bm{x}_{0}) - 2\mathbb Ef(\bm{x}^*)}{\gamma}   + \frac{6\gamma^2L^2\epsilon^2V_{\max}^3 T}{(1-\beta)^2}  +  \frac{L\gamma V_{\max}\sigma^2T}{n} + \frac{2\gamma^2L^2\sigma^2 V_{\max}^2 T}{n(1-\beta)^2},
\end{align*}
Since $V_{\max} = \frac{1}{\sqrt{v_{\min}}}$, therefore the equation above becomes
\begin{align*}
  &\left(1-\frac{\gamma L}{v_{\min}} - \frac{2\gamma^2 L^2}{(1-\beta)^2v_{\min}^2} \right)\sum_{t=0}^T \mathbb E\|\nabla f(\bm{x}_t)\|^2_{V}\\
   \leq & \frac{2\mathbb E f(\bm{x}_{0}) - 2\mathbb Ef(\bm{x}^*)}{\gamma}   + \frac{6\gamma^2L^2\epsilon^2 T}{(1-\beta)^2v_{\min}^3}  +  \frac{L\gamma \sigma^2T}{nv_{\min}} + \frac{2\gamma^2L^2\sigma^2  T}{n(1-\beta)^2v_{\min}^2},
\end{align*}
\end{proof}

\section{Proof to Corollary~\ref{coro:global}}
\begin{proof}
By choosing $\gamma =\frac{1-\beta}{4LV_{\max} + \sigma\sqrt{\frac{T}{n}} + T^{^{\frac{1}{3}}}\epsilon^{^{\frac{2}{3}}} }   $, we can guarantee that
\begin{align*}
1- \gamma L - \frac{2\gamma^2 L^2V_{\max}^2}{(1-\beta)^2} \geq & \frac{1}{2}.
\end{align*}
So \eqref{main:theo:eq} leads to
\begin{align*}
\sum_{t=0}^T \mathbb E\|\nabla f(\bm{x}_t)\|^2_V \leq & \frac{2\left(\mathbb E f(\bm{y}_{0}) - f(\bm{y}^*) \right)}{(1-\beta)}\left(4LV_{\max} + \sigma\sqrt{\frac{T}{n}} + T^{^{\frac{1}{3}}}\epsilon^{^{\frac{2}{3}}} \right)\\
& + \left((1-\beta)L\sqrt{T}+ 2L^2V_{\max}^2\right)\frac{\sigma}{\sqrt{n}} +   6L^2\epsilon^{\frac{2}{3}}T^{\frac{1}{3}}V_{\max}^3\\
\frac{1}{T}\sum_{t=0}^T \mathbb E\|\nabla f(\bm{x}_t)\|^2_V \leq & \frac{2\left(\mathbb E f(\bm{y}_{0}) - f(\bm{y}^*) \right)}{(1-\beta)}\left(\frac{4LV_{\max}}{T} + \frac{\sigma}{\sqrt{nT}} + T^{^{-\frac{2}{3}}}\epsilon^{^{\frac{2}{3}}} \right)\\
& + \left((1-\beta)L+ \frac{2L^2V_{\max}^2}{\sqrt{T}}\right)\frac{\sigma}{\sqrt{nT}} +   6L^2\epsilon^{\frac{2}{3}}T^{-\frac{2}{3}}V_{\max}^3.
\end{align*}
Treating $f(\bm{y}_1) - f^*$, $\beta$ and $L$ as constants, from the inequality above we get
\begin{align*}
\frac{1}{T}\sum_{t=0}^T \mathbb E\|\nabla f(\bm{x}_t)\|^2 \lesssim \frac{\sigma}{\sqrt{nT}} + \frac{\epsilon^{\frac{2}{3}}}{T^{\frac{2}{3}}} + \frac{1}{T}.
\end{align*}
It completes the proof.
\end{proof}

\end{document}